\newcommand{\itemspacing}{\setlength{\itemsep}{0pt}}
\renewcommand{\geq}{\geqslant}
\renewcommand{\leq}{\leqslant}
\newcommand{\Bool}{\ensuremath{\{0,1\}}}
\newcommand{\imp}{\rightarrow}
\newcommand{\OR}{\ensuremath{\mathrm{OR}}}
\newcommand{\NAND}{\ensuremath{\mathrm{NAND}}}
\newcommand{\ORconj}{\ensuremath{\OR{}\text{-conj}}}
\newcommand{\NANDconj}{\ensuremath{\NAND{}\text{-conj}}}
\newcommand{\IMconj}{\ensuremath{\text{IM-conj}}}
\newcommand{\bwcomp}[1]{{\widetilde{#1}}}
\newcommand{\width}[1]{{\mathrm{wd}(#1)}}
\newcommand{\powerset}[1]{{\mathcal{P}({#1})}}
\newcommand{\pppleq}{\leq_{\mathrm{ppp}}}
\newcommand{\Ror}{R_{\mathrm{OR}}}
\newcommand{\Rnand}{R_{\mathrm{NAND}}}
\newcommand{\Rork}[1][k]{R_{\mathrm{OR}, {#1}}}
\newcommand{\Rnandk}[1][k]{R_{\mathrm{NAND}, {#1}}}
\newcommand{\Rimp}{R_{\imp}}
\newcommand{\Rpmi}{R_{\leftarrow}}
\newcommand{\Rneq}{R_{\neq}}
\newcommand{\Req}{R_{=}}
\newcommand{\Reqk}[1][k]{R_{{=},{#1}}}
\newcommand{\Rzero}{R_{\mathrm{zero}}}
\newcommand{\Rone}{R_{\mathrm{one}}}
\newcommand{\GammaPin}{\ensuremath{\Gamma_\mathrm{\!pin}}}
\newcommand{\tuple}[1]{\left\langle\,{#1}\,\right\rangle}
\newcommand{\abar}{\overline{a}}
\newcommand{\bbar}{\overline{b}}
\newcommand{\cbar}{\overline{c}}
\newcommand{\vbar}{\overline{v}}
\newcommand{\numBIS}{\ensuremath{\#\mathrm{BIS}}}
\newcommand{\numSAT}{\ensuremath{\#\mathrm{SAT}}}
\newcommand{\numwHIS}{\ensuremath{\#w\mathrm{\text{-}HIS}}}
\newcommand{\numwHISd}[1][d]{\ensuremath{\#w\mathrm{\text{-}HIS}_{#1}}}
\newcommand{\CSP}{\ensuremath{\mathrm{CSP}}}
\newcommand{\numCSP}{\ensuremath{\#\mathrm{CSP}}}
\newcommand{\numCSPd}[1][d]{\ensuremath{\numCSP_{#1}}}
\newcommand{\Ptime}{\ensuremath{\mathbf{P}}}
\newcommand{\numP}{\ensuremath{\#\mathbf{P}}}
\newcommand{\FPtime}{\ensuremath{\mathbf{FP}}}
\newcommand{\NPtime}{\ensuremath{\mathbf{NP}}}
\newcommand{\RPtime}{\ensuremath{\mathbf{RP}}}
\newcommand{\APredto}{\leq_{\mathrm{AP}}}
\newcommand{\APred}{\leq_{\mathrm{AP}}}
\newcommand{\APequiv}{\equiv_{\mathrm{AP}}}
\newcommand{\GFtwo}{\ensuremath{\mathrm{GF}_2}}
\title[Approximating Bounded-Degree Boolean \#CSP]%
      {The Complexity of Approximating\\
       Bounded-Degree Boolean \#CSP}
\thanks{Funded in part by the EPSRC grant ``The Complexity of
    Counting in Constraint Satisfaction Problems''.}
\author[Leeds]{M.\@ Dyer}{Martin Dyer}
\address[Leeds]{School of Computing, University of Leeds, Leeds,
                LS2~9JT, U.K.}
\email{{M.E.Dyer,D.M.Richerby}@leeds.ac.uk}
\author[Liv]{L.\@ A.\@ Goldberg}{Leslie Ann Goldberg}
\address[Liv]{Department of Computer Science, University of Liverpool,
              Liverpool, L69~3BX, U.K.}
\email{L.A.Goldberg@liverpool.ac.uk}
\author[Liv,Bris]{M.\@ Jalsenius}{Markus Jalsenius}
\address[Bris]{Current address: Department of Computer Science,
               University of Bristol, Merchant Venturers Building,
               Woodland Road, Bristol, BS8~1UB, U.K.}
\email{M.Jalsenius@bristol.ac.uk}
\author[Leeds]{D.\@ M.\@ Richerby}{David~Richerby}
\keywords{Boolean constraint satisfaction problem,
    generalized satisfiability, counting, approximation algorithms.}
\subjclass{F.2.2, G.2.1}
\begin{document}

\begin{abstract}
    The degree of a CSP instance is the maximum number of times that a
    variable may appear in the scope of constraints. We consider the
    approximate counting problem for Boolean CSPs with bounded-degree
    instances, for constraint languages containing the two unary
    constant relations $\{0\}$ and $\{1\}$. When the maximum degree is
    at least $25$ we obtain a complete classification of the
    complexity of this problem.  It is exactly solvable in
    polynomial-time if every relation
    in the constraint language is affine. It is equivalent to the
    problem of approximately counting independent sets in bipartite
    graphs if every relation can be expressed as conjunctions of
    $\{0\}$, $\{1\}$ and binary implication.  Otherwise, there is no
    FPRAS unless $\NPtime = \RPtime$. For lower degree bounds,
    additional cases arise in which the complexity is related to the
    complexity of approximately counting independent sets in
    hypergraphs.
\end{abstract}

\maketitle


\section{Introduction}
\label{sec:Intro}

In the constraint satisfaction problem (CSP), we seek to assign values
from some domain to a set of variables, while satisfying given
constraints on the combinations of values that certain subsets of the
variables may take.  Constraint satisfaction problems are ubiquitous
in computer science, with close connections to graph theory, database
query evaluation, type inference, satisfiability, scheduling and
artificial intelligence \cite{KV2000:Conjunctive,
Kum1992:CSP-algorithms, Mon1974:Constraints}.  CSP can also be
reformulated in terms of homomorphisms between relational structures
\cite{FV1998:MMSNP} and conjunctive query containment in database
theory \cite{KV2000:Conjunctive}.  Weighted versions of CSP
appear in statistical physics, where they correspond to partition
functions of spin systems \cite{Wel1993:Complexity}.

We give formal definitions in Section~\ref{sec:Prelim} but, for now,
consider an undirected graph $G$ and the CSP where the domain is
$\{\mathrm{red}, \mathrm{green}, \mathrm{blue}\}$, the variables are
the vertices of $G$ and the constraints specify that, for every edge
$xy\in G$, $x$ and $y$ must be assigned different values.  Thus, in a
satisfying assignment, no two adjacent vertices are given the same
colour: the CSP is satisfiable if, and only if, the graph is
3-colourable.  As a second example, given a formula in 3-CNF, we can
write a system of constraints over the variables, with domain
$\{\mathrm{true}, \mathrm{false}\}$, that requires the assignment to
each clause to satisfy at least one literal.
Clearly, the resulting CSP is directly equivalent to the original
satisfiability problem.


\subsection{Decision CSP}
\label{sec:Decision}

In the \emph{uniform constraint satisfaction problem}, we are given
the set of constraints explicitly, as lists of allowable combinations
for given subsets of the variables; these lists can be considered as
relations over the domain.  Since it includes problems such as
3-\textsc{sat} and 3-\textsc{colourability}, uniform CSP is
\NPtime{}-complete.  However, uniform CSP also includes problems in
\Ptime{}, such as 2-\textsc{sat} and 2-\textsc{colourability}, raising
the natural question of what restrictions lead to tractable problems.
There are two natural ways to restrict CSP: we can restrict the form
of the instances and we can restrict the form of the constraints.

The most common restriction to CSP is to allow only certain fixed
relations in the constraints.  The list of allowed relations is
known as the \emph{constraint language} and we write $\CSP(\Gamma)$
for the so-called \emph{non-uniform} CSP in which each constraint
states that the values assigned to some tuple of variables must be a
tuple in a specified relation in $\Gamma$.

The classic example of this is Schaefer's dichotomy for Boolean
constraint languages $\Gamma$ (i.e., those with domain $\{0,1\}$;
often called ``generalized satisfiability'') \cite{Sch1978:Boolean}.
He showed that $\CSP(\Gamma)$ is in $\Ptime$ if $\Gamma$ is included
in one of six classes and is \NPtime{}-complete, otherwise.  More
recently, Bulatov has produced a corresponding dichotomy for the
three-element domain \cite{Bul2006:Ternary}.  These two results
restrict the size of the domain but allow relations of arbitrary arity
in the constraint language.  The converse restriction --- relations of
restricted arity, especially binary relations, over arbitrary finite
domains --- has also been studied in depth \cite{HN1990:H-color,
    HN2004:Homomorphisms}.

For all $\Gamma$ studied so far, $\CSP(\Gamma)$ has been either in
\Ptime{} or \NPtime{}-complete and Feder and Vardi have conjectured
that this holds for every constraint language \cite{FV1998:MMSNP}.
Ladner has shown that it is not the case that every problem in
\NPtime{} is either in \Ptime{} or \NPtime{}-complete since, if
$\Ptime{}\neq \NPtime{}$, there is an infinite, strict hierarchy
between the two \cite{Lad1975:Reducibility}.  However, there are
problems in \NPtime{}, such as graph Hamiltonicity and even
connectedness, that cannot be expressed as $\CSP(\Gamma)$ for any
finite $\Gamma\,$\footnote{This follows from results on the expressive
    power of existential monadic second-order logic
    \cite{FSV1995:Monadic}.} and Ladner's diagonalization does not
seem to be expressible in CSP \cite{FV1998:MMSNP}, so a dichotomy for
CSP appears possible.

Restricting the tree-width of instances has also been a fruitful
direction of research \cite{Fre1990:CSP-tw, KV2000:Games-CSP}.  In
contrast, little is known about restrictions on the degree of
instances, i.e., the maximum number of times that any variable may
appear.  Dalmau and Ford have shown that, for any fixed Boolean
constraint language $\Gamma$ containing the constant unary relations
$\Rzero=\{0\}$ and $\Rone=\{1\}$, the complexity of $\CSP(\Gamma)$ for
instances of degree at most three is exactly the same as the
complexity of $\CSP(\Gamma)$ with no degree restriction
\cite{DF2003:bdeg-gensat}. The case where variables may appear at most
twice has not yet been completely classified; it is known that
degree-2 $\CSP(\Gamma)$ is as hard as general $\CSP(\Gamma)$
whenever $\Gamma$ contains $\Rzero$ and $\Rone$ and some relation
that is not a $\Delta$-matroid \cite{Fed2001:Fanout}; the known
polynomial-time cases come from restrictions on the kinds of
$\Delta$-matroids that appear in $\Gamma$ \cite{DF2003:bdeg-gensat}.


\subsection{Counting CSP}

A generalization of classical CSP is to ask how many satisfying
solutions there are.  This is referred to as counting CSP, \numCSP{}.
Clearly, the decision problem is reducible to counting: if we can
efficiently count the solutions, we can efficiently determine whether
there is at least one.  The converse does not hold: for example, we
can determine in polynomial time whether a graph admits a perfect
matching but it is \numP{}-complete to count the perfect
matchings, even in a bipartite graph \cite{Val1979:Enumeration}.

\numP{} is the class of functions $f$ for which there is a
nondeterministic, polynomial-time Turing machine that has exactly
$f(x)$ accepting paths for input $x$ \cite{Val1979:Permanent}.  It is
easily seen that the counting version of any \NPtime{} decision
problem is in \numP{} and \numP{} can be considered the counting
``analogue'' of \NPtime{}.  Note, though that problems that are
\numP{}-complete under appropriate reductions are, under standard
complexity-theoretic assumptions, considerably harder than
\NPtime{}-complete problems: $\Ptime^{\numP}$ includes the whole of
the polynomial hierarchy \cite{Tod1989:PH}, whereas $\Ptime^{\NPtime}$
is generally thought not to.

Although no dichotomy is known for CSP, Bulatov has
recently shown that, for all $\Gamma\!$, $\numCSP(\Gamma)$ is either
computable in polynomial time or \numP{}-complete
\cite{Bul2008:Dichotomy}.  However, Bulatov's dichotomy sheds little
light on which constraint languages yield polynomial-time counting
CSPs and which do not.  The criterion of the dichotomy is based on
``defects'' in a certain infinite algebra built up from the
polymorphisms of $\Gamma$ and it is open whether the characterization
is even decidable.  It also seems not to apply to bounded-degree
\numCSP{}.

So, although there is a full dichotomy for $\numCSP(\Gamma)$, results
for restricted forms of constraint language are still of interest.
Creignou and Hermann have shown that only one of Schaefer's
polynomial-time cases for Boolean languages survives the transition to
counting: $\numCSP(\Gamma)\in\FPtime$ (i.e., has a polynomial time
algorithm) if $\Gamma$ is affine (i.e., each relation is the solution
set of a system of linear equations over \GFtwo{}) and is
\numP{}-complete, otherwise \cite{CH1996:Bool-numCSP}.  This result
has been extended to rational and even complex-weighted instances
\cite{WBool, CLXxxxx:Complex-numCSP} and, in the latter case, the
dichotomy is shown to hold for the restriction of the problem in which
instances have degree~$3$. This implies that the degree-3 problem
$\numCSPd[3](\Gamma)$ ($\numCSP(\Gamma)$ restricted to instances of
degree~3) is in \FPtime{} if $\Gamma$ is affine and is
\numP{}-complete, otherwise.


\subsection{Approximate counting}

Since $\numCSP(\Gamma)$ is very often \numP{}-complete, approximation
algorithms play an important role.  The key concept is that of a
\emph{fully polynomial randomized approximation scheme} (FPRAS).  This
is a randomized algorithm for computing some function $f(x)$, taking
as its input $x$ and a constant $\epsilon > 0$, and computing a value
$Y$ such that $e^{-\epsilon} \leq Y/f(x) \leq e^\epsilon$ with
probability at least $\tfrac{3}{4}$, in time polynomial in
both $|x|$ and ${\epsilon}^{-1}$. (See Section~\ref{sec:Prelim:Approx}.)

Dyer, Goldberg and Jerrum have classified the complexity of
approximately computing $\numCSP(\Gamma)$ for Boolean constraint
languages \cite{DGJ2007:Bool-approx}.  When all relations in $\Gamma$
are affine, $\numCSP(\Gamma)$ can be computed exactly in polynomial
time by the result of Creignou and Hermann discussed above
\cite{CH1996:Bool-numCSP}.  Otherwise, if every relation in $\Gamma$
can be defined by a conjunction of pins (i.e., assertions $v=0$ or
$v=1$) and Boolean implications, then $\numCSP(\Gamma)$ is as hard to
approximate as the problem \numBIS{} of counting independent sets in a
bipartite graph; otherwise, $\numCSP(\Gamma)$ is as hard to
approximate as the problem \numSAT{} of counting the satisfying truth
assignments of a Boolean formula.  Dyer, Goldberg, Greenhill and
Jerrum have shown that the latter problem is complete for \numP{}
under appropriate approximation-preserving reductions (see
Section~\ref{sec:Prelim:Approx}) and has no FPRAS unless $\NPtime =
\RPtime$ \cite{DGGJ2004:Approx}, which is thought to be unlikely.  The
complexity of \numBIS{} is currently open: there is no known FPRAS but
it is not known to be \numP{}-complete, either. \numBIS{} is known to
be complete for a logically-defined subclass of \numP{} with respect
to approximation-preserving reductions \cite{DGGJ2004:Approx}.


\subsection{Our result}

We consider the complexity of approximately solving
Boolean $\numCSP$ problems when instances have bounded
degree. Following Dalmau and Ford~\cite{DF2003:bdeg-gensat} and
Feder~\cite{Fed2001:Fanout} we consider the case in which
$\Rzero=\{0\}$ and $\Rone=\{1\}$ are available.  We proceed by showing
that any Boolean relation that is not definable as a conjunction of
ORs or NANDs can be used in low-degree instances to assert equalities
between variables.  Thus, we can side-step degree restrictions by
replacing high-degree variables with distinct variables asserted
to be equal.

Our main result, Corollary~\ref{cor:main}, is a trichotomy for the
case in which instances have maximum degree~$d$ for some $d\geq
25$. If every relation in~$\Gamma$ is affine, then $\numCSPd(\Gamma
\cup \{\Rzero,\Rone\})$ is solvable in polynomial time. Otherwise, if
every relation in $\Gamma$ can be defined as a conjunction of
$\Rzero$, $\Rone$ and binary implications, then $\numCSPd(\Gamma \cup
\{\Rzero,\Rone\})$ is equivalent in approximation complexity to
$\numBIS{}$. Otherwise, it has no FPRAS unless
$\NPtime=\RPtime$. Theorem~\ref{theorem:complexity} gives a partial
classification of the complexity when $d<25$. In the new cases that
arise here, the complexity is given in terms of the complexity of
counting independent sets in hypergraphs with bounded degree and
bounded hyper-edge size. The complexity of this problem is not fully
understood and we explain what is known about it in
Section~\ref{sec:Complexity}.


\section{Preliminaries}
\label{sec:Prelim}


\subsection{Basic notation}

We write $\abar$ for the tuple $\tuple{a_1, \dots, a_r}$, which we
often shorten to $\abar = a_1\dots a_r$.  We write $a^r$ for the
$r$-tuple $a\dots a$ and $\abar\bbar$ for the
tuple formed from the elements of $\abar$ followed by those of
$\bbar$.  The \emph{bit-wise complement} of a relation $R\subseteq
\Bool^r$ is the relation $\bwcomp{R} = \{\tuple{a_1\oplus 1, \dots,
a_r\oplus 1} \mid \abar\in R\}$, where $\oplus$ denotes addition
modulo~2.

We say that a relation $R$ is \emph{ppp-definable}\footnote{This
should not be confused with the concept of primitive positive
definability (pp-definability) which appears in algebraic
treatments of CSP and \numCSP{}, for example in the work of
Bulatov \cite{Bul2008:Dichotomy}.} in a relation $R'$ and write
$R\pppleq R'$ if $R$ can be obtained from $R'$ by some sequence of the
following operations:
\begin{itemize}
\itemspacing
\item permutation of columns (for notational convenience only);
\item pinning (taking sub-relations of the form $R_{i\mapsto c} =
    \{\abar\in R \mid a_i = c\}$ for some $i$ and some $c\in\Bool$); and
\item projection (``deleting the $i$th column'' to give the relation
    $\{a_1\dots a_{i-1}a_{i+1}\dots a_r \mid a_1\dots a_r\in R\}$).
\end{itemize}
It is easy to see that $\pppleq$ is reflexive and transitive and that,
if $R\pppleq R'\!$, then $R$ can be obtained from $R'$ by first
permuting the columns, then making some pins and then projecting.

We write $\Req = \{00, 11\}$, $\Rneq = \{01, 10\}$, $\Ror = \{01, 10,
11\}$, $\Rnand = \{00, 01, 10\}$, $\Rimp = \{00, 01, 11\}$ and $\Rpmi
= \{00, 10, 11\}$.  For $k\geq 2$, we write $\Reqk = \{0^k\!, 1^k\}$,
$\Rork = \Bool^k \setminus \{0^k\}$ and $\Rnandk = \Bool^k \setminus
\{1^k\}$ (i.e., $k$-ary equality, \OR{} and \NAND{}).


\subsection{Boolean constraint satisfaction problems}

A \emph{constraint language} is a set $\Gamma = \{R_1, \dots, R_m\}$
of named Boolean relations.  Given a set $V$ of variables, the set of
\emph{constraints} over $\Gamma$ is the set $\mathrm{Cons}(V,\Gamma)$
which contains $R(\vbar)$ for every relation $R\in\Gamma$ with arity
$r$ and every $\vbar\in V^r\!$. Note that $v=v'$ and $v\neq v'$ are
not constraints unless the appropriate relations are included in
$\Gamma\!$. The \emph{scope} of a constraint $R(\vbar)$ is the tuple
$\vbar$, which need not consist of distinct variables.

An \emph{instance} of the constraint satisfaction problem (CSP) over
$\Gamma$ is a set $V$ of variables and a set $C\subseteq
\mathrm{Cons}(V,\Gamma)$ of constraints.  An \emph{assignment} to a
set $V$ of variables is a function $\sigma\colon V\to \Bool$.  An
assignment to $V$ \emph{satisfies} an instance $(V, C)$ if
$\tuple{\sigma(v_1), \dots, \sigma(v_r)}\in R$ for every constraint
$R(v_1, \dots, v_r)$.  We write $Z(I)$ for the number of satisfying
assignments to a CSP instance $I$.  We study the counting CSP problem
$\numCSP(\Gamma)$, parameterized by $\Gamma\!$, in which we must
compute $Z(I)$ for an instance $I=(V, C)$ of CSP over $\Gamma$.

The \emph{degree} of an instance is the greatest number of times any
variable appears among its constraints.  Note that the variable $v$
appears twice in the constraint $R(v,v)$.  Our specific interest in
this paper is in classifying the complexity of bounded-degree counting
CSPs.  For a constraint language $\Gamma$ and a positive integer $d$,
define $\numCSPd(\Gamma)$ to be the restriction of $\numCSP(\Gamma)$
to instances of degree at most $d$.  Instances of degree~1 are
trivial.

\begin{theorem}
\label{thrm:degree-1}
    For any $\Gamma\!$, $\numCSPd[1](\Gamma)\in \FPtime$. \qed
\end{theorem}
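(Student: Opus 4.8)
The plan is to show that when every variable may appear at most once in the constraints, the satisfying assignments factor completely over the constraints, so $Z(I)$ is a product of independently-computable local counts.

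First I would observe that in a degree-$1$ instance $I = (V, C)$, no variable occurs in more than one constraint, and no variable occurs twice within a single constraint (since that would already be degree~$2$). Hence the scopes of distinct constraints are pairwise disjoint tuples of distinct variables. Let $V_0 \subseteq V$ be the set of variables that appear in no constraint at all. Then an assignment $\sigma\colon V \to \Bool$ satisfies $I$ if and only if, for each constraint $R(\vbar) \in C$ with $\vbar = v_1 \dots v_r$, the restricted tuple $\tuple{\sigma(v_1), \dots, \sigma(v_r)}$ lies in $R$, and $\sigma$ is completely unconstrained on $V_0$. Because the scopes are disjoint, these conditions are independent, so
\[
    Z(I) = 2^{|V_0|} \prod_{R(\vbar) \in C} |R|.
\]

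Next I would argue this quantity is computable in polynomial time. Each relation $R \in \Gamma$ has fixed arity, so $|R|$ is a constant (depending only on $\Gamma$, not on $I$); the product has at most $|C| \le |V|$ factors, and $|V_0|$ is obtained by a single linear scan through the constraint list. Thus $Z(I)$ is the product of at most $|V|$ numbers, each bounded by a constant, times $2^{|V_0|}$; this product has $O(|V|)$ bits and is computed by $O(|V|)$ multiplications of small numbers, which is polynomial in $|I|$. Hence $\numCSPd[1](\Gamma) \in \FPtime$.

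There is essentially no obstacle here: the only point requiring a moment's care is the bookkeeping that a variable cannot appear twice inside one constraint in a degree-$1$ instance (so every scope really is a tuple of distinct variables and the local count is literally $|R|$ rather than the size of some pinned sub-relation), and that variables outside all scopes contribute the factor $2^{|V_0|}$. Everything else is immediate from the disjointness of the scopes.
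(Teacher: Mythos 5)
Your proof is correct and is exactly the straightforward factorization argument that the paper has in mind when it states this theorem without proof (marking it \qed{} as immediate): since degree~$1$ forces the constraint scopes to be disjoint tuples of distinct variables, $Z(I)=2^{|V_0|}\prod_{R(\vbar)\in C}|R|$ is computable in polynomial time. No gaps; the handling of variables occurring in no constraint and of repeated variables within a scope is the only care needed, and you address both.
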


When considering $\numCSPd$ for $d\geq 2$, we follow established
practice by allowing \emph{pinning} in the constraint language
\cite{DF2003:bdeg-gensat, Fed2001:Fanout}. We write $\Rzero=\{0\}$ and
$\Rone=\{1\}$ for the two singleton unary relations.  We refer to
constraints in $\Rzero$ and $\Rone$ as \emph{pins}.  To make notation
easier, we will sometimes write constraints using constants instead of
explicit pins.  That is, we will allow the constants 0 and~1 to appear
in the place of variables in the scopes of constraints.  Such
constraints can obviously be rewritten as a set of ``proper''
constraints, without increasing degree. We let $\GammaPin$ denote the
constraint language $\{\Rzero, \Rone\}$.


\subsection{Hypergraphs}

A \emph{hypergraph} $H=(V,E)$ is a set $V=V(H)$ of vertices and a set
$E = E(H)\subseteq \powerset{V}$ of non-empty \emph{hyper-edges}. The
\emph{degree} of a vertex $v\in V(H)$ is the number $d(v) = |\{e\in
E(H)\mid v\in e\}|$ and the degree of a hypergraph is the maximum
degree of its vertices. If $w = \max \{|e| \mid e\in E(H)\}$, we say
that $H$ has \emph{width} $w$.  An \emph{independent set} in a
hypergraph $H$ is a set $S\subseteq V(H)$ such that $e\nsubseteq S$
for every $e\in E(H)$.  Note that an independent set may contain more
than one vertex from any hyper-edge of size at least three.

We write \numwHIS{} for the problem of counting the independent sets
in a width-$w$ hypergraph $H$, and \numwHISd{} for the restriction of
\numwHIS{} to inputs of degree at most $d$.


\subsection{Approximation complexity}
\label{sec:Prelim:Approx}

A \emph{randomized approximation scheme} (RAS) for a function $f\colon\Sigma^*\rightarrow\mathbb{N}$ is a probabilistic Turing machine that takes as input a pair $(x,\epsilon)\in \Sigma^*\times (0,1)$, and produces, on an output tape, an integer random variable~$Y$ with $\Pr(e^{-\epsilon} \leq Y/f(x) \leq e^\epsilon)\geq \frac{3}{4}$.%
       \footnote{The choice of the value $\frac{3}{4}$ is inconsequential: the same class of problems has an FPRAS if we choose any probability $p$ with $\frac{1}{2}<p<1$ \cite{JVV1986:Randgen}.}
A \emph{fully polynomial randomized approximation scheme (FPRAS)} is a
RAS that runs in time $\mathrm{poly}(|x|,\epsilon^{-1})$.

To compare the complexity of approximate counting problems, we use the AP-reductions of \cite{DGGJ2004:Approx}. Suppose $f$ and $g$ are two functions from some input domain $\Sigma^*$ to the natural numbers and we wish to compare the complexity of approximately computing~$f$ to that of approximately computing~$g$. An \emph{approximation-preserving} reduction from~$f$ to~$g$ is a probabilistic oracle Turing machine $M$ that takes as input a pair $(x,\epsilon)\in \Sigma^*\times (0,1)$, and satisfies the following three conditions: (i) every oracle call made by $M$ is of the form $(w,\delta)$ where $w\in \Sigma^*$ is an instance of~$g$, and $0<\delta<1$ is an error bound satisfying $\delta^{-1} \leq \mathrm{poly}(|x|,\epsilon^{-1})$; (ii) $M$ is a randomized approximation scheme for $f$ whenever the oracle is a randomized approximation scheme for $g$; and (iii) the run-time of $M$ is polynomial in $|x|$ and $\epsilon^{-1}$.
 
If there is an approximation-preserving reduction from $f$ to $g$, we write $f\APred g$ and say that $f$ is \emph{AP-reducible} to $g$. If $g$ has an FPRAS, then so does $f$. If $f\APred g$ and $g\APred f$, then we say that $f$ and $g$ are \emph{AP-interreducible} and write $f\APequiv g$.


\section{Classes of relations}
\label{sec:Relations}

A relation $R\subseteq \Bool^r$ is \emph{affine} if it is the set of
solutions to some system of linear equations over \GFtwo{}.  That is,
there is a set $\Sigma$ of equations in variables $x_1, \dots, x_r$,
each of the form $x_{i_1} \oplus \dots \oplus x_{i_n} = c$, where
$\oplus$ denotes addition modulo~2 and $c\in \Bool$, such that $\abar\in
R$ if, and only if, the assignment $x_1\mapsto a_1, \dots, x_r\mapsto
a_r$ satisfies every equation in $\Sigma$.  Note that the empty and complete
relations are affine.

We define \IMconj{} to be the class of relations defined by a
conjunction of pins and (binary) implications.  This class is called
$\text{IM}_2$ in \cite{DGJ2007:Bool-approx}.

\begin{lemma}
\label{lemma:IMconj-implies}
    If $R\in\IMconj$ is not affine, then $\Rimp\pppleq R$.\qed
\end{lemma}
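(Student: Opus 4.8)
The plan is to analyze the structure of a relation $R \in \IMconj$ that is not affine, show that this non-affineness forces a ``genuine'' implication to appear among the conjuncts, and then extract $\Rimp$ by pinning and projection. First I would write $R$ as a conjunction of pins ($x_i = 0$ or $x_i = 1$) and binary implications ($x_i \imp x_j$) on the variables $x_1, \dots, x_r$. The key observation is that a conjunction consisting only of pins and of implications whose "direction" can be made consistent — more precisely, any conjunction in which the implication graph, after accounting for pins, behaves like an equivalence structure — is affine: pins are affine (they are the equation $x_i = c$), equalities $x_i = x_j$ are affine, and if the implications among the unpinned variables only ever occur in both directions (so $x_i \imp x_j$ always comes with $x_j \imp x_i$), then $R$ is cut out by a system of \GFtwo{} equations. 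Hence, since $R$ is \emph{not} affine, there must be a "one-way" implication: two distinct columns $i, j$ such that $x_i \imp x_j$ is enforced but $x_j \imp x_i$ is not, and moreover $x_i$ and $x_j$ are not both pinned (a pinned variable contributes only an affine constraint, and an implication into or out of a pinned variable either is vacuous or forces a pin on the other variable).

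Next I would turn this structural fact into a ppp-reduction. Consider the projection of $R$ onto the two columns $i$ and $j$ identified above, after pinning every other column to some value consistent with $R$ (such a value exists because $R$ is nonempty — if $R$ were empty it would be affine). This projection $R'$ is a subset of $\{00, 01, 10, 11\}$ that contains $\{00, 01, 11\}$ (because $x_i \imp x_j$ holds, $10 \notin R'$, and the other three patterns are each realizable: $11$ and $00$ by suitable consistent extensions, and $01$ precisely because $x_j \imp x_i$ is \emph{not} forced). Thus $R' = \{00, 01, 11\} = \Rimp$, giving $\Rimp \pppleq R$. The only subtlety is making sure that after pinning the other columns we can independently realize all three patterns $00$, $01$, $11$ on columns $i,j$; this follows because the conjunction defining $R$ decomposes, and on the sub-formula involving only $x_i, x_j$ (with everything else pinned) the surviving constraint is exactly $x_i \imp x_j$ together possibly with redundant pins — but if a pin on $x_i$ or $x_j$ survived, then $x_i$ or $x_j$ would be pinned in $R$, which we excluded, or the implication would be vacuous, contradicting our choice.

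The main obstacle I anticipate is the first step: cleanly arguing that "no one-way implication" implies affine. One has to handle the interaction between pins and implications carefully — e.g., an implication $x_i \imp x_j$ where $x_i$ is pinned to $1$ silently pins $x_j$ to $1$, and an implication $x_i \imp x_j$ where $x_j$ is pinned to $0$ silently pins $x_i$ to $0$ — so the "direction" of implications must be assessed only among variables that are genuinely free (appear both as $0$ and as $1$ over $R$). Once the bookkeeping is set up so that the free variables carry only bidirectional implications, they fall into equivalence classes, each class is forced to a constant or is free-but-constant-within-a-tuple, and the whole relation is visibly affine; the contrapositive then yields a free column $i$, a free column $j$, and a surviving one-way implication $x_i \imp x_j$, which is exactly what the ppp-reduction needs. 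I expect the rest to be routine.
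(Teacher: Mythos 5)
Your overall strategy---locate a semantically one-way implication between two non-constant columns of $R$ and then ppp-extract $\Rimp$---is the right one (the paper states this lemma without proof, but this is essentially the standard argument). The structural first step is fine in outline provided ``enforced'' is read semantically: after propagating the pins through the implication digraph of a defining formula, the free columns are exactly the non-constant ones, entailment between free columns is reachability in the digraph, and if that reachability relation is symmetric then $R$ is a product of pinned columns and equality classes, hence affine. The contrapositive gives free columns $i\neq j$ with $a_i\leq a_j$ for every tuple of $R$ but with some tuple having $(a_i,a_j)=(0,1)$.

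The extraction step, however, contains a genuine flaw. You pin every column other than $i,j$ to ``some value consistent with $R$'' and claim the resulting binary relation still contains $00$, $01$ and $11$; your justification is that a surviving pin on $x_i$ or $x_j$ would mean that $x_i$ or $x_j$ was already pinned in $R$. That is false: pinning the \emph{other} columns can force $x_i$ or $x_j$ constant even though both are free in $R$. Concretely, let $R$ be defined by $(x_1\imp x_3)\wedge(x_3\imp x_2)$ and take the pair $(i,j)=(1,2)$, which is a legitimate one-way pair of free columns. Pinning $x_3=1$ forces $x_2=1$ and yields $\{01,11\}$, while pinning $x_3=0$ forces $x_1=0$ and yields $\{00,01\}$; no pin of $x_3$ gives $\Rimp$ on columns $(1,2)$, and your ``suitable consistent extensions'' realizing the three patterns use \emph{different} values of $x_3$, which is not permitted once $x_3$ has been pinned. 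The repair is simple: do not pin at all. Since ppp-definability allows plain projection, project $R$ onto columns $(i,j)$: the pattern $10$ is excluded because $a_i\leq a_j$ holds on all of $R$; $11$ occurs because column $i$ takes value $1$ in some tuple (and then $a_j=1$); $00$ occurs because column $j$ takes value $0$ in some tuple (and then $a_i=0$); and $01$ occurs because the reverse implication fails. This projection is exactly $\Rimp$, and with that replacement your argument goes through.
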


Let \ORconj{} be the set of Boolean relations that are defined by a
conjunction of pins and \OR{}s of any arity and \NANDconj{} the set of
Boolean relations definable by conjunctions of pins and \NAND{}s
(i.e., negated conjunctions) of any arity.  We say that
one of the defining formulae of these relations is \emph{normalized}
if no pinned variable appears in any \OR{} or \NAND{}, the arguments
of each individual \OR{} and \NAND{} are distinct, every \OR{} or
\NAND{} has at least two arguments and no \OR{} or \NAND{}'s arguments
are a subset of any other's.

\begin{lemma}
\label{lemma:conj-norm}
    Every \ORconj{} (respectively, \NANDconj{}) relation is defined by
    a unique normalized formula.\qed
\end{lemma}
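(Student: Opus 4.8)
The plan is to prove both existence and uniqueness of a normalized defining formula for a relation $R\in\ORconj$; the \NANDconj{} case follows by bit-wise complementation, since $R\in\NANDconj$ iff $\bwcomp{R}\in\ORconj$ and normalization is preserved under flipping all bits. First I would establish existence: starting from an arbitrary defining conjunction of pins and \OR{}s, I would repeatedly apply simplification steps — merge repeated arguments within a single \OR{}, delete any \OR{} that mentions a variable already pinned to $1$ (it is automatically satisfied), substitute the constant into any \OR{} mentioning a variable pinned to $0$ (the variable just drops out of that clause), discard a clause whose argument set is a superset of another clause's argument set (it is implied), and finally deal with clauses that collapse to a single variable $v$ (these force $v=1$, so replace them by a pin and re-run the earlier steps). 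Each step either removes a clause, shrinks a clause, or converts a clause to a pin, so the process terminates; what remains is by construction a normalized formula defining the same relation. One should also note the degenerate cases: if the clause set becomes empty we get the complete relation (or a product of pinned coordinates), and an empty clause (\OR{} of zero arguments, i.e.\ "false") makes $R$ the empty relation — the statement should be read with the convention covering these, or the empty relation excluded.

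The heart of the lemma is uniqueness. The clean way is to show that each syntactic ingredient of a normalized formula is determined by $R$ itself. For the pins: a coordinate $i$ is pinned to $c$ in \emph{every} defining formula iff $a_i=c$ for all $\abar\in R$, so the set of pinned coordinates and their values is a function of $R$ alone. For the \OR{}-clauses, I would characterize which sets $S$ of (non-pinned) coordinates appear as clauses. The key observation is: a set $S$ of unpinned coordinates is the argument set of a clause in the normalized formula iff (a) the assignment that is $0$ on $S$, $1$ on all other unpinned coordinates, and respects the pins is \emph{not} in $R$, but (b) for every proper nonempty subset $S'\subsetneq S$, the analogous "all-zero on $S'$, all-one elsewhere" assignment \emph{is} in $R$. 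Condition (a) says some clause's argument set is contained in $S$; condition (b), applied with the minimality built into normalization (no clause's arguments are a subset of another's, every clause has $\ge 2$ arguments), forces that clause to be exactly $S$. Working this equivalence out carefully — in both directions, and checking it pins down the clause set exactly — is the step I expect to be the main obstacle, essentially because one has to juggle the interaction between the pins and the minimality/antichain conditions and make sure no clause is "hidden" by another.

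Once the clause set is shown to be exactly $\{\,S\subseteq \{\text{unpinned coords}\} : \text{(a) and (b) hold}\,\}$, uniqueness is immediate: two normalized formulae for $R$ have the same pins (by the first observation) and the same clause sets (by the characterization), hence are identical up to the order in which clauses and arguments are written, which is the sense in which "unique" is meant. I would present the \OR{} case in full and then remark that the \NANDconj{} case is obtained verbatim by replacing $R$ with $\bwcomp{R}$, "all-zero on $S$" assignments with "all-one on $S$" assignments, and "$\OR$" with "$\NAND$" throughout, using that $\width{\cdot}$-style normalization conditions are symmetric under complementation.
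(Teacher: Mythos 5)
Your proof is correct, and since the paper states this lemma with the proof omitted (the \qed{} is attached to the statement), there is nothing to diverge from; your argument is the natural canonical-form one, in line with the monotone-CNF facts the paper invokes via Proposition~\ref{prop:OR-monotone}. Your key characterization is right: in a normalized formula the pinned coordinates are exactly the constant coordinates of $R$ (an unpinned coordinate takes both values, since setting all unpinned variables to $1$, or all but one, satisfies every clause of size at least two), and $S$ is a clause argument set iff the assignment that is $0$ on $S$ and $1$ on the other unpinned coordinates violates $R$ while every proper nonempty $S'\subsetneq S$ does not --- the antichain condition then forces the witnessing clause to equal $S$ exactly, so two normalized formulae for the same $R$ coincide. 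Two small points: the phrase ``pinned to $c$ in every defining formula'' should be ``in every \emph{normalized} defining formula'' (a non-normalized formula can force a constant via a unit \OR{} without a pin), and your caveat about the empty relation is genuine --- uniqueness as literally stated needs $R$ nonempty (or a convention for contradictory pins), which the paper leaves implicit.
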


Given the uniqueness of defining normalized formulae, we define the
\emph{width} of an \ORconj{} or \NANDconj{} relation $R$ to be
$\width{R}$, the greatest number of arguments to any of the \OR{}s or
\NAND{}s in the normalized formula that defines it.  Note that, from
the definition of normalized formulae, there are no relations of
width~1.

\begin{lemma}
\label{lemma:ORconj-OR}
    If $R\in\ORconj$ has width $w$, then $\Rork[2], \dots, \Rork[w]
    \pppleq R$. Similarly, if $R\in\NANDconj$ has width $w$, then
    $\Rnandk[2], \dots, \Rnandk[w] \pppleq R$.\qed
\end{lemma}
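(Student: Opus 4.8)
The plan is to prove the \ORconj{} statement; the \NANDconj{} one is symmetric, with the roles of $0$ and $1$ exchanged (a \NAND{} clause is satisfied as soon as one of its arguments is set to $0$, just as an \OR{} clause is satisfied as soon as one of its arguments is set to $1$). We may assume $w\geq 2$, since otherwise the list $\Rork[2],\dots,\Rork[w]$ is empty and there is nothing to prove.

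First I would use Lemma~\ref{lemma:conj-norm} to fix the unique normalized formula defining $R$: a conjunction of pins together with \OR{} clauses $C_1,\dots,C_t$ on the non-pinned variables. By the definition of width, some clause has exactly $w$ arguments, and after permuting columns we may take it to be $C=\mathrm{OR}(x_1,\dots,x_w)$ on the first $w$ columns of $R$. Now fix $k$ with $2\leq k\leq w$. The claim is that $\Rork\pppleq R$ is witnessed by the following ppp-operations: first pin each of $x_{k+1},\dots,x_w$ to $0$; then, for every clause $C_j\neq C$, choose a variable $z_j$ occurring in $C_j$ but not among $x_1,\dots,x_w$, and pin $z_j$ to $1$; finally project onto the first $k$ columns.

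Two consequences of normalization make this work. Since the argument set of no clause is a subset of that of any other clause, every $C_j\neq C$ must contain a variable outside $\{x_1,\dots,x_w\}$, so the choice of $z_j$ is always possible. And the pins are mutually consistent: every variable pinned to $0$ lies in $C$, whereas every variable pinned to $1$ lies outside $C$ (the $z_j$, together with the variables that the formula itself pins to $1$, which by normalization occur in no clause), so no variable is assigned two different values. After pinning, each $C_j\neq C$ is satisfied because it contains $z_j=1$, while $C$ collapses to $\mathrm{OR}(x_1,\dots,x_k)$; every remaining column is either fixed by a pin or entirely unconstrained, so projecting these columns away yields exactly $\Rork$. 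Both inclusions are routine: an $\Rork$-tuple extends to a tuple of $R$ obeying the pins by setting all other free variables to $0$, and conversely any tuple of $R$ obeying the pins has a $1$ among its first $k$ coordinates, since $x_{k+1},\dots,x_w$ are pinned to $0$ and $C$ must hold. As this is a legitimate sequence of column permutation, pinning and projection, it follows directly from the definition of $\pppleq$ that $\Rork\pppleq R$.

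The delicate step is this bookkeeping around the pins: verifying, purely from the normalization hypotheses, that every clause other than $C$ can be switched off by a single pin to a variable that is disjoint from the $w-k$ variables used to shrink $C$ down to arity $k$, and that none of these pins conflict with each other or with the formula's own pins. Once that is in hand, the projection step is immediate, and the whole argument is uniform in $k$, so it simultaneously yields $\Rork[2],\dots,\Rork[w]\pppleq R$.
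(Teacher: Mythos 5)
Your proof is correct: pinning the surplus variables of a widest clause to $0$, killing every other clause via a pin to $1$ on a variable outside that clause (which exists by the no-subset condition of normalization), and projecting does yield exactly $\Rork$, and the \NANDconj{} case follows by the $0$/$1$ symmetry. The paper omits the proof of this lemma, but your argument is the evident intended one, so there is nothing substantive to compare beyond noting that your bookkeeping of the pins is exactly the point the normalization conditions are designed to make work.
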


Given tuples $\abar, \bbar\in \Bool^r\!$, we write $\abar\leq \bbar$
if $a_i\leq b_i$ for all $i\in [1,r]$.  If $\abar\leq \bbar$ and
$\abar \neq \bbar$, we write $\abar < \bbar$.  We say that a relation
$R\subseteq \Bool^r$ is \emph{monotone} if, whenever $\abar\in R$ and
$\abar\leq \bbar$, then $\bbar\in R$.  We say that $R$ is
\emph{antitone} if, whenever $\abar\in R$ and $\bbar\leq \abar$, then
$\bbar\in R$.  Clearly, $R$ is monotone if, and only if, $\bwcomp{R}$
is antitone.  Call a relation \emph{pseudo-monotone} (respectively,
\emph{pseudo-antitone}) if its restriction to non-constant columns is
monotone (respectively, antitone).  The following is a consequence of
results in \cite[Chapter~7.1.1]{KnuXXXX:TAOCPv4A}.

\begin{proposition}
\label{prop:OR-monotone}
    A relation $R\subseteq \Bool^r$ is in \ORconj{} (respectively,
    \NANDconj) if, and only if, it is pseudo-monotone (respectively,
    pseudo-antitone).\qed
\end{proposition}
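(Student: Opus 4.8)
The plan is to prove the \ORconj{} case directly; the \NANDconj{} case then follows by applying bit-wise complementation, since $R$ is pseudo-monotone if and only if $\bwcomp{R}$ is pseudo-antitone and, as is easily checked from the definitions, $R\in\ORconj$ if and only if $\bwcomp{R}\in\NANDconj$ (complementing every variable turns each \OR{} into a \NAND{} and swaps the roles of $\Rzero$ and $\Rone$ on the pinned columns). So fix $R\subseteq\Bool^r$ and split off its constant columns: after permuting columns, write $R$ as $R'$ living on the non-constant columns together with a fixed pattern on the constant ones. Since pseudo-monotonicity and membership in \ORconj{} are both insensitive to adjoining constant columns (a constant column $v=c$ is recorded by a pin in the defining formula, and by normalization Lemma~\ref{lemma:conj-norm} no such pinned variable occurs in any \OR{}), it suffices to prove the equivalence for relations $R$ all of whose columns are non-constant, i.e.\ to show: $R$ is monotone $\iff$ $R\in\ORconj$ with a normalized formula whose pin-set is empty.

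For the easy direction ($\Leftarrow$), observe that a pin-free \OR{} clause $x_{i_1}\vee\dots\vee x_{i_k}$ defines a monotone relation, a conjunction of monotone relations is monotone, and the empty/complete relations are monotone; hence every pin-free \ORconj{} relation is monotone. For the substantive direction ($\Rightarrow$), suppose $R$ is monotone. If $R=\emptyset$ this is the complete relation's complement — handle $R=\emptyset$ and $R=\Bool^r$ as trivial base cases. Otherwise, since $R$ is monotone and non-empty, $1^r\in R$. Take $\mathcal{M}$ to be the set of minimal elements of $R$ under $\leq$; monotonicity gives $R=\{\bbar\mid \exists\,\abar\in\mathcal{M},\ \abar\leq\bbar\}$, i.e.\ $R$ is the up-set generated by $\mathcal{M}$. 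Now for each minimal element $\abar\in\mathcal{M}$ form the clause $C_\abar=\bigvee_{i:a_i=0} x_i$ (the \OR{} of the coordinates where $\abar$ is $0$); this is well-defined and non-trivial because each column of $R$ is non-constant, which forces $\abar\ne 1^r$ for every minimal $\abar$ unless $R=\{1^r\}$, and $R=\{1^r\}$ with all columns non-constant forces $r=0$. The claim is that $R$ equals the relation defined by $\bigwedge_{\abar\in\mathcal{M}} C_\abar$: a tuple $\bbar$ violates $C_\abar$ exactly when $b_i=0$ for every $i$ with $a_i=0$, i.e.\ exactly when $\bbar\leq\bwcomp{\abar}$'s... more precisely when $\{i:b_i=1\}\subseteq\{i:a_i=1\}$, i.e.\ $\bbar\leq\abar$; so $\bbar$ satisfies all clauses iff $\bbar\not\leq\abar$ for every minimal $\abar$, and this happens iff $\bbar\in R$ (if $\bbar\in R$ it dominates some minimal element and hence cannot be $\leq$ a — wait, one must rule out $\bbar\leq\abar$ with $\abar$ minimal and $\bbar\in R$: minimality of $\abar$ and $\bbar\leq\abar$ force $\bbar=\abar\notin$ contradiction only if $\bbar\ne\abar$; and $\bbar\leq\abar,\ \bbar\in R$ forces $\bbar=\abar$ by minimality, but then $\bbar\not<\abar$ so no clause is violated — reconciling this needs the clause to use strict containment, which it does not, so the correct reading is that $\bbar$ violates $C_\abar$ iff $\{i:b_i=1\}\subseteq\{i:a_i=1\}$, and we want this to be equivalent, over $\bbar\in\Bool^r$, to $\bbar\notin R$). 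Spelling this out carefully is the one place requiring attention: one shows $\bbar\in R\iff$ for all minimal $\abar$, $\{i:b_i=1\}\not\subseteq\{i:a_i=1\}$, using that $\bbar\in R\iff\bbar\geq\abar$ for some minimal $\abar\iff\{i:a_i=1\}\subseteq\{i:b_i=1\}$ for some minimal $\abar$, and then checking the Boolean identity that an up-set is the intersection of the complements of the principal down-sets it excludes.

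The main obstacle, then, is not conceptual but bookkeeping: verifying that the conjunction of the clauses $C_\abar$ really cuts out exactly $R$, and that after discarding redundant clauses (those whose argument set contains another's) and merging one arrives at the normalized formula of Lemma~\ref{lemma:conj-norm} — so that ``width'' is well-defined and consistent. An alternative, and in fact cleaner, route is simply to cite \cite[Chapter~7.1.1]{KnuXXXX:TAOCPv4A} for the classical fact that the monotone Boolean functions are exactly those computed by monotone (OR/AND) formulae, equivalently those whose set of true points is an up-set, which is the content of the equivalence once constant columns and pins are peeled off as above; the only work left for us is the reduction to the pin-free, non-constant-column case, which is the routine normalization argument sketched in the first paragraph. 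I would present the proof in that order: first reduce to pin-free relations with all columns non-constant, then invoke (or reprove in two lines) the up-set characterization of monotone functions, then transfer back through complementation to get the \NANDconj{} half.
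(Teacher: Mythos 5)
There is a genuine error in the core construction of your substantive direction. You build one clause per \emph{minimal element of $R$}: for minimal $\abar\in\mathcal{M}$ you take $C_\abar=\bigvee_{i:a_i=0}x_i$. But a tuple $\bbar$ violates $C_\abar$ exactly when $\{i:b_i=1\}\subseteq\{i:a_i=1\}$, and in particular $\abar$ itself violates its own clause, so every minimal element of $R$ is excluded by the formula $\bigwedge_{\abar\in\mathcal{M}}C_\abar$. Concretely, for $R=\Ror=\{01,10,11\}$ the minimal elements are $01$ and $10$, your clauses are $x_1$ and $x_2$, and the conjunction defines $\{11\}\neq R$. The equivalence you then propose to ``spell out carefully'' --- that $\bbar\in R$ iff $\{i:b_i=1\}\not\subseteq\{i:a_i=1\}$ for all minimal $\abar$ --- is therefore false (take $\bbar=\abar$ minimal). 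You half-noticed this in the parenthetical ``wait\dots'' but did not resolve it; deferring to ``the Boolean identity that an up-set is the intersection of the complements of the principal down-sets it excludes'' does not help, because the principal down-sets your clauses carve out are generated by \emph{members} of $R$, not by excluded points. You have conflated the two standard normal forms: minimal true points give the monotone DNF $\bigvee_{\abar\in\mathcal{M}}\bigwedge_{i:a_i=1}x_i$, whereas the monotone CNF you need indexes clauses by the \emph{maximal elements of the complement} $\Bool^r\setminus R$: for each maximal $\cbar\notin R$ take $\bigvee_{i:c_i=0}x_i$; then $\bbar$ fails that clause iff $\bbar\leq\cbar$, and satisfying all clauses is equivalent to $\bbar\in R$ since the complement of an up-set is the union of the principal down-sets of its maximal elements. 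With that one change the argument goes through.

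The surrounding structure is fine and matches the paper in spirit: the reduction to relations with no constant columns (pins correspond exactly to constant columns in a normalized formula), the easy direction, and the transfer to \NANDconj{} by bit-wise complementation are all correct. Note also that the paper itself gives no proof but simply records the proposition as a consequence of the monotone-CNF characterization in Knuth, Chapter~7.1.1 --- i.e., your ``cleaner alternative route'' is precisely the paper's route; your added value would be the self-contained argument, which is worth keeping only once the clause construction is corrected as above.
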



\section{Simulating equality}
\label{sec:SimEq}

An important ingredient in bounded-degree dichotomy theorems
\cite{CLXxxxx:Complex-numCSP} is expressing equality using constraints
from a language that does not necessarily include the equality
relation.

A constraint language $\Gamma$ is said to \emph{simulate} the $k$-ary
equality relation $\Reqk$ if, for some $\ell \geq k$, there is a
$(\Gamma \cup \GammaPin)$-CSP instance $I$ with variables $x_1, \dots,
x_\ell$ that has exactly $m\geq 1$ satisfying assignments
$\sigma$ with $\sigma(x_1) = \dots = \sigma(x_k) = 0$, exactly $m$
with $\sigma(x_1) = \dots = \sigma(x_k) = 1$ and no other satisfying
assignments.  If, further, the degree of $I$ is $d$ and the degree of
each variable $x_1, \dots, x_k$ is at most $d-1$, we say that $\Gamma$
\emph{$d$-simulates} $\Reqk$.  We say that $\Gamma$
\emph{$d$-simulates equality} if it $d$-simulates $\Reqk$ for all
$k\geq 2$.

The point is that, if $\Gamma$ $d$-simulates equality, we can express
the constraint $y_1 = \dots = y_r$ in $\Gamma \cup \GammaPin$ and then
use each $y_i$ in one further constraint, while still having an
instance of degree $d$.  The variables $x_{k+1}, \dots, x_\ell$ in the
definition function as auxiliary variables and are not used in any
other constraint.  Simulating equality makes degree bounds moot.

\begin{proposition}
\label{prop:bound-unbound}
    If $\Gamma$ $d$-simulates equality, then $\numCSP(\Gamma) \APredto
    \numCSPd(\Gamma \cup \GammaPin)$.\qed
\end{proposition}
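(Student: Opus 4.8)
The plan is to reduce an arbitrary instance $I = (V, C)$ of $\numCSP(\Gamma)$ to an instance $I'$ of $\numCSPd(\Gamma \cup \GammaPin)$ whose satisfying assignments are in a known-multiplicity correspondence with those of $I$. The key idea, already flagged in the text, is to replace each high-degree variable by a cluster of fresh low-degree copies tied together by the equality gadget guaranteed by $d$-simulation. First I would, for each variable $v \in V$ that appears $r_v$ times in the constraints of $C$, introduce $r_v$ fresh variables $y^v_1, \dots, y^v_{r_v}$ and rewrite $I$ so that the $j$th occurrence of $v$ uses the copy $y^v_j$; call the resulting constraint set $C_0$. In $C_0$ every $y^v_j$ occurs exactly once, so $C_0$ has degree $1$ on these variables. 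Then, for each $v$ with $r_v \geq 2$, I append a copy of the degree-$d$ equality gadget $I_v$ on $y^v_1, \dots, y^v_{r_v}$ (plus private auxiliary variables $x^v_{k+1}, \dots, x^v_\ell$, renamed to be distinct across different $v$), taking $k = r_v$ in the definition of $d$-simulation. Since each $y^v_j$ already appears once in $C_0$ and the gadget uses it with degree at most $d-1$, the combined instance $I'$ has degree at most $d$; the auxiliary variables appear only inside their own gadget, so their degree is at most $d$ as well.

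Next I would verify the counting correspondence. Fix an assignment $\sigma$ to $V$. An assignment $\tau$ to the variables of $I'$ is consistent with $\sigma$ if $\tau(y^v_j) = \sigma(v)$ for all $v$ and $j$. By the defining property of the equality gadget, for each $v$ with $r_v \geq 2$ the number of ways to extend the common value $\sigma(v)$ on $y^v_1, \dots, y^v_{r_v}$ to a satisfying assignment of $I_v$ is exactly $m_v \geq 1$, independent of whether $\sigma(v) = 0$ or $\sigma(v) = 1$; and every satisfying assignment of $I_v$ has $y^v_1, \dots, y^v_{r_v}$ all equal, so no ``inconsistent'' $\tau$ satisfies $I'$. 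The constraints of $C_0$ are satisfied by a consistent $\tau$ precisely when $\sigma$ satisfies $C$, because the $j$th occurrence of each $v$ sees $\tau(y^v_j) = \sigma(v)$. Hence $Z(I') = M \cdot Z(I)$, where $M = \prod_{v : r_v \geq 2} m_v$ is a factor that depends only on $I$ (it is computed directly from the fixed gadgets and the occurrence counts $r_v$, without any oracle calls).

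Finally I would package this as an AP-reduction. On input $(I, \epsilon)$, the machine constructs $I'$ and $M$ as above in polynomial time (the gadgets have constant size for each fixed $k \leq |C| \cdot \max_R \mathrm{arity}(R)$, which is polynomially bounded), makes a single oracle call $(I', \epsilon)$ to $\numCSPd(\Gamma \cup \GammaPin)$ to obtain an estimate $\hat Z$, and outputs $\hat Z / M$. Since $M$ is a positive integer known exactly, multiplicative error is preserved, so this is a randomized approximation scheme for $\numCSP(\Gamma)$ whenever the oracle is one for $\numCSPd(\Gamma \cup \GammaPin)$; the error parameter passed to the oracle is just $\epsilon$, trivially satisfying the polynomial-slack condition, and the run-time is clearly polynomial. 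The only mild subtlety — the part I would be most careful about — is making sure the equality gadgets for distinct variables use disjoint auxiliary variable sets and that concatenating them does not inadvertently raise the degree of any $y^v_j$ above $d$; this is exactly why the definition of $d$-simulation caps the degree of the $x_1, \dots, x_k$ at $d-1$, leaving room for the one occurrence in $C_0$.
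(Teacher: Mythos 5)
Your construction is exactly the intended one: split each variable $v$ of degree $r_v$ into copies $y^v_1,\dots,y^v_{r_v}$, attach the degree-$d$ equality gadget for $\Reqk[r_v]$ with disjoint auxiliary variables, observe that the degree budget works out because the designated gadget variables have degree at most $d-1$, and conclude $Z(I')=M\cdot Z(I)$ with $M=\prod_v m_v$. The degree bookkeeping and the solution-counting correspondence are handled correctly (modulo the trivial point that variables occurring at most once, in particular unconstrained ones, should simply be retained so as not to lose factors of~$2$). The step that does not hold up as written is the claim that $M$ is ``computed directly from the fixed gadgets\dots without any oracle calls'' in polynomial time. The definition of $d$-simulation is purely existential: for each $k$ it guarantees a gadget with \emph{some} $m\geq 1$, but here $k=r_v$ grows with the input, the gadget has size growing with $k$ (your parenthetical ``constant size for each fixed $k$'' hides this, since $k$ is not fixed), and nothing in the definition tells you the value of $m_v$; exactly counting the satisfying assignments of a polynomial-size gadget is in general as hard as the problem you are reducing, so you cannot just ``read off'' $M$.

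This is fixable within your own framework, in either of two ways. One is to note that the simulations actually established in Section~\ref{sec:SimEq} (e.g.\ Lemma~\ref{lemma:proj-eq}) come with explicit gadgets whose counts are known by construction ($m=\alpha^k$ or $\alpha^k\beta^k$), so $M$ is indeed computable whenever the proposition is applied in the paper --- but then your proof covers only uniformly given simulations, and the same implicit uniformity assumption is needed even to \emph{construct} the gadgets in polynomial time. The more robust fix is to dispense with exact knowledge of $M$: the disjoint union $J$ of the gadgets alone is itself a degree-$d$ instance of $\numCSPd(\Gamma\cup\GammaPin)$ with exactly $2^tM$ satisfying assignments, where $t$ is the number of gadgets, so a second oracle call on $(J,\epsilon/2)$ together with the call on $(I',\epsilon/2)$ lets you output the ratio, which approximates $Z(I)=Z(I')/(Z(J)/2^t)$ within the required accuracy (after standard probability amplification of the two oracle answers). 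With that adjustment your argument is a complete proof, and it is the natural argument the paper leaves to the reader.
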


We now investigate which relations simulate equality.

\begin{lemma}
\label{lemma:proj-eq}
    $R\in\Bool^r$ 3-simulates equality if $\Req\pppleq R$,
    $\Rneq\pppleq R$ or $\Rimp\pppleq R$.
\end{lemma}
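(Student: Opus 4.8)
The plan is to handle the three hypotheses $\Req\pppleq R$, $\Rneq\pppleq R$ and $\Rimp\pppleq R$ by exhibiting, in each case, an explicit small CSP instance over $\Gamma\cup\GammaPin$ (where $\Gamma=\{R\}$) that witnesses $3$-simulation of $\Reqk$ for every $k\geq 2$. Recall that, since $R\pppleq R'$ means $R$ is obtained from $R'$ by column permutation, pins and projections, whenever $S\pppleq R$ we can replace a hypothetical constraint $S(v_1,\dots,v_s)$ by a constraint $R(w_1,\dots,w_r)$ in which the $w_j$ are either the variables $v_1,\dots,v_s$ (in a suitable order, possibly with some repeated to realise projections) or fresh auxiliary variables pinned to the appropriate constants; this substitution does not change the number of satisfying assignments projected onto $v_1,\dots,v_s$, and it adds at most one further use of each $v_i$. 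So it suffices to build the witnessing instance using the binary relation $S\in\{\Req,\Rneq,\Rimp\}$ directly, check its degree, and then translate back to $R$.

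The main construction is a ``chain''. Given $k\geq 2$, introduce variables $x_1,\dots,x_k$ together with auxiliaries, and assert a cycle or path of constraints forcing $x_1,\dots,x_k$ to agree. For $\Req$ the natural instance is the collection of constraints $\Req(x_i,x_{i+1})$ for $1\le i\le k-1$: its only satisfying assignments are the all-$0$ and all-$1$ assignments (so $m=1$), each internal variable $x_i$ has degree $2$ and the endpoints $x_1,x_k$ have degree $1$, so after translating each $\Req$-constraint to an $R$-constraint the degree of the whole instance is at most $3$ while $x_1,\dots,x_k$ keep degree at most $2=d-1$. For $\Rneq$ one does the same with $\Rneq(x_i,y_i)$, $\Rneq(y_i,x_{i+1})$ through a fresh $y_i$, so that consecutive $x$'s are forced equal via a double negation; again $m=1$ and the degree bookkeeping is identical. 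For $\Rimp$ the trick is that $\Rimp(a,b)\wedge\Rimp(b,a)$ forces $a=b$, so take the constraints $\Rimp(x_i,x_{i+1})$ and $\Rimp(x_{i+1},x_i)$ for each $i$ (or, to keep degrees low, route one direction through an auxiliary variable); this again pins down exactly the two constant assignments on $x_1,\dots,x_k$.

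The one genuinely delicate point is the degree accounting after the $\pppleq$-translation, and in particular making sure each of $x_1,\dots,x_k$ is used at most $d-1=2$ times in the final $R$-instance rather than $3$. Because the translation of a single binary constraint into an $R$-constraint adds one use of each participating variable, a variable that already appears in two binary constraints of the chain would rise to degree $4$; the fix is to make the chain out of binary constraints in which each $x_i$ appears only once — e.g. by inserting a private auxiliary copy variable between consecutive $x_i$'s and shifting the ``load'' onto the auxiliaries, which are unconstrained outside the gadget and so may safely reach degree $3$. I expect this rerouting — choosing the gadget so that the $x_i$ land at degree $\le 2$ and only auxiliaries hit degree $3$, uniformly in $k$, and simultaneously for all three relations $\Req,\Rneq,\Rimp$ — to be the part that needs the most care; everything else (counting that $m\ge 1$ and that there are no stray satisfying assignments) is a routine check on the explicit gadgets.
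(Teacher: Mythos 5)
There is a genuine gap, and it sits exactly where you wave it through. The definition of $d$-simulation is a \emph{counting} condition: the gadget must have exactly the same number $m\geq 1$ of satisfying assignments with $x_1=\dots=x_k=0$ as with $x_1=\dots=x_k=1$. Your key claim --- that replacing a constraint $S(v_1,\dots,v_s)$, for $S\in\{\Req,\Rneq,\Rimp\}$ with $S\pppleq R$, by a single $R$-constraint with pinned auxiliaries and fresh variables in the projected positions ``does not change the number of satisfying assignments projected onto $v_1,\dots,v_s$'' --- is false. Projection is existential quantification, so the replacement multiplies the count by the number of tuples of $R$ extending the given values of $v_1,\dots,v_s$, and this multiplier can depend on those values: if $R$ has $\alpha$ tuples beginning $000^p1^q$ and $\gamma$ tuples beginning $110^p1^q$, your chain of translated $\Req$- or $\Rimp$-constraints has $\alpha^{k-1}$ satisfying assignments in the all-zero case and $\gamma^{k-1}$ in the all-one case (similarly $\alpha^k$ versus $\gamma^k$ for a cycle), and nothing forces $\alpha=\gamma$. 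So your gadget only witnesses simulation when $\alpha=\gamma$; the paper's proof devotes its main effort to the case $\alpha\neq\gamma$, arguing by induction on the arity of $R$: if, say, $\alpha>\gamma$ then $\alpha\geq 2$, so two distinct extensions of $000^p1^q$ disagree in some coordinate, and pinning that coordinate to the value taken by some extension of $110^p1^q$ yields a relation of smaller arity in which both $00\cdots$ and $11\cdots$ still extend and no $10\cdots$ tuple appears, so the inductive hypothesis applies (base case $r=2$ gives $\alpha=\gamma=1$). Incidentally, in the $\Rneq$ case your alternating path through the $y_i$ does work even when the two multiplicities differ, because both sides of the count come out as the same product $(\alpha\beta)^{k-1}$ --- but your stated value $m=1$ is wrong in general, and you never identify this balancing as the issue.

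By contrast, the point you single out as ``genuinely delicate'' --- degree bookkeeping after translation --- is not a problem and rests on a misconception: translating a binary constraint into one $R$-constraint does not ``add one use of each participating variable''; each $x_i$ occupies exactly one position of the scope of that $R$-constraint, the pinned positions are filled by fresh variables (or constants), and the projected positions by fresh variables used nowhere else. Hence in the cyclic gadget each $x_i$ has degree exactly $2\leq d-1$ and every auxiliary has degree at most $2$, which is the paper's (routine) accounting. Also, a small but real slip: projections cannot be ``realised by repeating variables''; identification of variables is not among the ppp operations, and repetition would change the relation being enforced. The missing idea, in short, is the treatment of unequal extension multiplicities ($\alpha\neq\gamma$), which is the actual content of the paper's proof of this lemma.
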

\begin{proof}
    For each $k\geq 2$, we show how to 3-simulate $\Reqk$.  We may
    assume without loss of generality that the ppp-definition of
    $\Req$, $\Rneq$ or $\Rimp$ from $R$ involves applying the identity
    permutation to the columns, pinning columns 3 to $3+p-1$ inclusive
    to zero, pinning columns $3+p$ to $3+p+q-1$ inclusive to one (that
    is, pinning $p\geq 0$ columns to zero and $q\geq 0$ to one)
    and then projecting away all but the first two columns.

    Suppose first that $\Req\pppleq R$ or $\Rimp\pppleq R$.  $R$
    must contain $\alpha\geq 1$ tuples that begin $000^p1^q$,
    $\beta\geq 0$ that begin $010^p1^q$ and $\gamma\geq 1$ that begin
    $110^p1^q$, with $\beta=0$ unless we are ppp-defining $\Rimp$.
    We consider, first, the case where $\alpha=\gamma$, and show that
    we can 3-simulate $\Reqk$, expressing the constraint $\Reqk(x_1,
    \dots, x_k)$ with the constraints
    \begin{equation*}
        R(x_1 x_2 0^p 1^q *), \ R(x_2 x_3 0^p 1^q *), \dots,
	    \ R(x_{k-1} x_k 0^p 1^q *), \ R(x_k x_1 0^p 1^q *) \,,
    \end{equation*}
    where $*$ denotes a fresh $(r-2-p-q)$-tuple of variables in each
    constraint.  These constraints are equivalent to $x_1 = \dots =
    x_k = x_1$ or to $x_1\imp \dots \imp x_k \imp x_1$ so constrain
    the variables $x_1, \dots, x_k$ to have the same value, as
    required.  Every variable appears at most twice and there are
    $\alpha^k$ solutions to these constraints that put
    $x_1=\dots=x_k=0$, $\gamma^k=\alpha^k$ solutions with $x_1=\dots=x_k=1$ and no
    other solutions.  Hence, $R$ 3-simulates $\Reqk$, as required.

    We now show, by induction on $r$, that we can 3-simulate $\Reqk$
    even in the case that $\alpha\neq\gamma$.  For the base case,
    $r=2$, we have $\alpha=\gamma=1$ and we are done.  For the
    inductive step, let $r>2$ and assume, w.l.o.g.\@ that
    $\alpha>\gamma$ ($\alpha<\gamma$ is symmetric).  In particular, we
    have $\alpha\geq 2$, so there are distinct tuples $000^p1^q\abar$,
    and $000^p1^q\bbar$ and $110^p1^q\cbar$ in $R$.  Choose $j$ such
    that $a_j\neq b_j$.  Pinning the $(2+p+q+j)$th column of $R$ to
    $c_j$ and projecting out the resulting constant column gives a
    relation $R'$ of arity $r-1$ containing at least one tuple
    beginning $000^p1^q$ and at least one beginning $110^p1^q$: by the
    inductive hypothesis, $R'$ 3-simulates $\Reqk$.

    Finally, we consider the case that $\Rneq\pppleq R$.  $R$ contains
    $\alpha\geq 1$ tuples beginning $010^p1^q$ and $\beta\geq 1$
    beginning $100^p1^q$.  We express the constraint $\Reqk(x_1,
    \dots, x_k)$ by introducing fresh variables $y_1, \dots, y_k$ and
    using the constraints
    \begin{equation*}
        \begin{array}{ccccc}
            R(x_1 y_1 0^p1^q*), & R(x_2 y_2 0^p1^q*), & \ldots,
                & R(x_{k-1} y_{k-1} 0^p1^q*), & R(x_k y_k 0^p1^q*), \\
            R(y_1 x_2 0^p1^q*), & R(y_2 x_3 0^p1^q*), & \ldots,
                & R(y_{k-1} x_k 0^p1^q*), & R(y_k x_1 0^p1^q*)\,.
        \end{array}
    \end{equation*}
    There are $\alpha^k \beta^k$ solutions when $x_1 = \dots = x_k =
    0$ (and $y_1 = \dots = y_k = 1$) and $\beta^k \alpha^k$ solutions
    when the $x$s are~1 and the $y$s are~0.  There are no other
    solutions and no variable is used more than twice.
\end{proof}

For $c\in \Bool$, an $r$-ary relation is \emph{$c$-valid} if it
contains the tuple $c^r\!$.

\begin{lemma}
\label{lemma:valid-eq}
    Let $r\geq 2$ and let $R\subseteq \Bool^r$ be 0- and 1-valid but
    not complete. Then $R$ 3-simulates equality.\qed
\end{lemma}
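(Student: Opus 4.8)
The plan is to reduce Lemma~\ref{lemma:valid-eq} to the cases already handled in Lemma~\ref{lemma:proj-eq} by showing that any $R$ satisfying the hypotheses ppp-defines one of $\Req$, $\Rneq$ or $\Rimp$. Since $R$ is $0$-valid and $1$-valid but not complete, there is some tuple $\dbar\in\Bool^r$ with $\dbar\notin R$; note $\dbar$ is non-constant. First I would handle the degenerate situations where $R$ has a constant column: if column $i$ is constantly $c$ in $R$, then pinning it is vacuous, and after projecting it away we still have a relation on $r-1\geq 1$ coordinates that is $0$- and $1$-valid (if $r-1\geq 2$) or, if $r$ drops to $1$, the relation is $\{0,1\}$ and we must argue separately — but actually if any single remaining coordinate is non-constant we can keep going, so the real content is when $R$ has no constant column, which I treat as the main case.

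So assume $R\subseteq\Bool^r$ has no constant column and is $0$- and $1$-valid but misses some $\dbar$. The goal is to pin all but two suitably chosen columns so that the projection onto those two columns is exactly one of $\Req$, $\Rneq$, $\Rimp$, $\Rpmi$; note $\Rpmi$ is just $\Rimp$ with columns swapped, so it suffices to land on $\Req$, $\Rneq$ or $\Rimp$. The key observation is that among the $r$ columns there must be two, say $i$ and $j$, on which the forbidden tuple $\dbar$ and the two constant tuples $0^r$, $1^r$ ``separate'' appropriately. I would pick a coordinate pair $(i,j)$ such that $\dbar$ restricted to $\{i,j\}$ is not $00$ or $11$ — such a pair exists because $\dbar$ is non-constant; say WLOG $d_i=0$, $d_j=1$ (the case $d_i=1,d_j=0$ is symmetric). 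Now pin every column $\ell\neq i,j$ to the value $d_\ell$. The resulting binary relation $R'$ on columns $i,j$ is a subset of $\Bool^2$ that contains $00$ (from $0^r\in R$), contains $11$ (from $1^r\in R$) and does \emph{not} contain $01$ (since that would mean $\dbar\in R$). So $R'\in\{\{00,11\},\{00,11,10\}\}$, i.e.\ $R'$ is either $\Req$ or $\Rpmi$; in either case Lemma~\ref{lemma:proj-eq} applies, and since the ppp-definition used only pinning and projection (the column permutation being trivial here), $R$ $3$-simulates equality.

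The one gap in the above is that I have not used $0$- and $1$-validity of $R'$ to rule out $R'$ itself being too large or the ``no constant column'' reduction being lossy; let me restate the clean structure. I would phrase the argument as: by the remark after the definition of $\pppleq$, it suffices to exhibit a sequence of pins and a final projection taking $R$ to one of $\Req,\Rneq,\Rimp,\Rpmi$; the construction above does exactly that whenever $R$ has no constant column, and the constant-column case is removed by an easy induction on $r$ (project away a constant column; if $r$ reaches $2$ with a constant column the relation is contained in $\{00,01\}$ or a rotation thereof, but being $0$- and $1$-valid forces a contradiction, so this terminal case cannot occur). The main obstacle I anticipate is bookkeeping the degenerate low-arity and constant-column cases cleanly so that the induction bottoms out correctly; the heart of the argument — finding the separating coordinate pair $(i,j)$ from the non-constant forbidden tuple $\dbar$ — is short once those cases are quarantined. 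Everything then follows from Lemma~\ref{lemma:proj-eq}, which already does the work of building the degree-$3$ gadget for $\Reqk$ for all $k$.
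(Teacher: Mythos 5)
The key step of your construction is false. When you pin every coordinate $\ell\notin\{i,j\}$ to $d_\ell$, the witnesses $0^r$ and $1^r$ of $0$- and $1$-validity do not in general survive: $0^r$ is consistent with those pins only if $d_\ell=0$ for every pinned $\ell$, and $1^r$ only if $d_\ell=1$ for every pinned $\ell$; for $r\geq 3$ at least one coordinate is pinned, so at least one of the two tuples (and typically both) is eliminated. Hence the assertion that $R'$ contains $00$ ``from $0^r\in R$'' and $11$ ``from $1^r\in R$'' is unjustified, and it genuinely fails. For example, take $R=\Reqk[3]=\{000,111\}$ and $\dbar=010$: every admissible choice of $(i,j)$ with $d_i\neq d_j$ pins the one remaining coordinate to $0$, and the resulting binary relation is the singleton $\{00\}$, which is none of $\Req,\Rneq,\Rimp,\Rpmi$ and does not simulate equality --- yet the lemma holds for this $R$ (project away one coordinate, with no pin, to get $\Req$). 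In general the corners of your chosen subcube other than the $\dbar$-corner are completely unconstrained by the hypotheses, so $R'$ may be empty, a singleton, or a relation with a constant column; the error therefore sinks the construction itself, not merely its justification.

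Two further remarks. First, the constant-column discussion is vacuous: since $0^r,1^r\in R$, a $0$- and $1$-valid relation has no constant columns, so no induction to remove them is needed. Second, and more importantly, the real content of the lemma is exactly the part you skip: one must show that $0$- and $1$-validity together with the existence of \emph{some} missing tuple always allows one to reach, by pins and projections (or by a direct degree-$3$ gadget in the style of the proof of Lemma~\ref{lemma:proj-eq}), one of $\Req$, $\Rneq$, $\Rimp$, $\Rpmi$. Nothing in your argument identifies which coordinates to pin to which values so that both a $00$-witness and a $11$-witness survive while some tuple is still excluded. Establishing this needs a genuine structural argument --- for instance an induction on $r$ in which one pins a single coordinate $i$ to $d_i$ and projects it away (this preserves non-completeness and one of the two validities), treating separately the case in which the other validity is destroyed for every choice of $i$, which forces many specific weight-$1$ and weight-$(r-1)$ tuples outside $R$ and then requires its own analysis. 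That case analysis, which is where the work of the lemma lies, is absent from your proposal.
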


In the following lemma, we do not require $R$ and $R'$ to be distinct.
The technique is to assert $x_1=\dots=x_k$ by simulating the formula
$\OR(x_1,y_1) \wedge \NAND(y_1,x_2) \wedge \OR(x_2,y_2) \wedge
\NAND(y_2,x_3) \wedge \cdots \wedge \OR(x_k,y_k) \wedge \NAND(y_k,
x_1)$.

\begin{lemma}
\label{lemma:OR-NAND-eq}
    If $\Ror\pppleq R$ and $\Rnand\pppleq R'\!$, then $\{R,R'\}$
    3-simulates equality.\qed
\end{lemma}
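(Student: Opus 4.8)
The plan is to mimic the structure of the proof of Lemma~\ref{lemma:proj-eq}, but using the two relations $R$ and $R'$ together. Since $\Ror\pppleq R$ and $\Rnand\pppleq R'$, I may assume (up to permuting columns) that $\Ror$ is obtained from $R$ by pinning some columns to $0$ and some to $1$ and then projecting onto the first two columns, and similarly for $\Rnand$ from $R'$. So $R$ contains $\alpha\geq 1$ tuples beginning $01\,0^p1^q$, $\beta\geq 1$ beginning $10\,0^p1^q$, and $\gamma\geq 1$ beginning $11\,0^p1^q$ (and none beginning $00\,0^p1^q$), where the pinned block for $R$ has $p$ zeros and $q$ ones; likewise $R'$ contains $\alpha'\geq 1$ tuples beginning $00\,0^{p'}1^{q'}$, $\beta'\geq 1$ beginning $01\,0^{p'}1^{q'}$, $\gamma'\geq 1$ beginning $10\,0^{p'}1^{q'}$, and none beginning $11\,0^{p'}1^{q'}$.

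First I would handle the case where the relevant multiplicities are balanced, i.e.\ where the number of $\OR$-witnesses is independent of the assignment and similarly for $\NAND$. Concretely, given $k\geq 2$, I would introduce fresh auxiliary variables $y_1,\dots,y_k$ and express $\Reqk(x_1,\dots,x_k)$ by the constraints
\begin{equation*}
    R(x_i\,y_i\,0^p1^q\,*)\ \text{for }i=1,\dots,k,\qquad
    R'(y_i\,x_{i+1}\,0^{p'}1^{q'}\,*)\ \text{for }i=1,\dots,k,
\end{equation*}
with indices mod $k$ and a fresh tuple $*$ of padding variables in each constraint. The first group forces $\OR(x_i,y_i)$ and the second forces $\NAND(y_i,x_{i+1})$, so $x_i=1$ implies (via $y_i$, which must then be allowed either way but is pinned down by the $\NAND$ only when $x_{i+1}$ matters) — more carefully, $\OR(x_i,y_i)\wedge\NAND(y_i,x_{i+1})$ is logically equivalent to $x_i\to x_{i+1}$, so the whole cycle forces $x_1=\dots=x_k$. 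Each $x_i$ appears exactly twice and each $y_i$ exactly twice, so the instance has degree $3$ and the $x_i$ have degree $2=d-1$. Counting: when all $x_i=0$, each $y_i$ must be $1$, contributing a factor from the $R$-witnesses beginning $01$ and the $R'$-witnesses beginning $10$; when all $x_i=1$, each $y_i$ may a priori be $0$ or $1$, but $\NAND(y_i,x_{i+1})$ with $x_{i+1}=1$ forces $y_i=0$, giving $R$-witnesses beginning $11$ and $R'$-witnesses beginning $00$. In the balanced case these two products of multiplicities are equal, so we get $m$ solutions of each type with $m\geq 1$, as required.

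The main obstacle is the unbalanced case, exactly as in Lemma~\ref{lemma:proj-eq}: the $\OR$-gadget may have a different number of witnesses for $x_i=0$ (namely $\alpha$) than for $x_i=1$ (namely $\gamma$), and similarly $R'$ has $\alpha'$ versus $\gamma'$. I would dispose of this by the same induction on arity used there: if, say, $\alpha\neq\gamma$, then one of them is at least $2$, so within $R$ there are two distinct tuples agreeing on the first $2+p+q$ coordinates; picking a coordinate $j$ where they differ and pinning the corresponding column of $R$ to the value taken there by one of the surviving tuples of the other type, then projecting out that now-constant column, yields a lower-arity relation $R''$ still satisfying $\Ror\pppleq R''$ but with strictly smaller witness counts, and likewise for $R'$. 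Iterating drives both relations down to the point where every column is forced, i.e.\ $\alpha=\gamma=1$ and $\alpha'=\gamma'=1$, and then the balanced construction above applies. One has to check that these pinnings do not increase degree and that the auxiliary variables introduced in the gadget remain genuinely auxiliary, but both are immediate from the construction. I expect the bookkeeping of the multiplicities through the induction to be the only delicate point; the logical correctness of the $\OR$/$\NAND$ cycle is routine.
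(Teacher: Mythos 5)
Your gadget is exactly the paper's stated technique (the cycle $\OR(x_1,y_1)\wedge\NAND(y_1,x_2)\wedge\dots\wedge\OR(x_k,y_k)\wedge\NAND(y_k,x_1)$), and in the balanced case your construction is essentially right, modulo two bookkeeping slips: the link $\OR(x_i,y_i)\wedge\NAND(y_i,x_{i+1})$ is equivalent to $x_{i+1}\imp x_i$ (i.e.\ $\Rpmi$), not $x_i\imp x_{i+1}$ (harmless in a cycle), and in the all-ones case the forced assignment $(x_i,y_i)=(1,0)$, $(y_i,x_{i+1})=(0,1)$ means the witnesses are the $R$-tuples beginning $10\,0^p1^q$ and the $R'$-tuples beginning $01\,0^{p'}1^{q'}$, i.e.\ $\beta$ and $\beta'$, not $\gamma$ and $\alpha'$. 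Hence the two solution counts are $(\alpha\gamma')^k$ and $(\beta\beta')^k$, and the balance condition you need is the \emph{joint} condition $\alpha\gamma'=\beta\beta'$; the quantities $\gamma$ and $\alpha'$ that you propose to equalize never enter the count at all, and your target ``$\alpha=\gamma=1$, $\alpha'=\gamma'=1$'' would not give balance without also controlling $\beta$ and $\beta'$.

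The genuine gap is the unbalanced case, which is the actual crux. Your inductive step asserts that pinning a column where two same-type tuples of $R$ differ, to the value of a tuple ``of the other type'', yields a lower-arity $R''$ with $\Ror\pppleq R''$. In Lemma~\ref{lemma:proj-eq} this works because only \emph{two} tuple types must survive and the pin value is dictated by the single minority-type tuple; here \emph{three} types ($01$-, $10$- and $11$-prefixed) must survive simultaneously, and one pin cannot always preserve them. Concretely, take $R=\{010,011,100,111\}$ (so $p=q=0$, $\alpha=2$, $\beta=\gamma=1$) and $R'=\Rnand$: the two $01$-type tuples differ in column~$3$, where the $10$-type tuple has a $0$ and the $11$-type tuple a $1$; pinning to $1$ leaves $\{011,111\}$, from which $\Ror$ is not ppp-definable (ppp-operations never increase the number of tuples), and pinning to $0$ leaves $\Rneq$, which also violates your claimed invariant (it is rescued by Lemma~\ref{lemma:proj-eq}, but your induction never considers such fallbacks). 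So the recursion as described does not go through. A correct completion keeps the cycle but symmetrizes it, e.g.\ routing each edge through a fresh variable $w_i$ and realizing $x_i\imp w_i$ by $\NAND(x_i,z_i)\wedge\OR(z_i,w_i)$ and $w_i\imp x_{i+1}$ by $\OR(x_{i+1},y_i)\wedge\NAND(y_i,w_i)$: every variable still occurs at most twice and both the all-zero and all-one counts equal $(\alpha\beta\beta'\gamma')^k$, so no case analysis on multiplicities is needed; alternatively one must carry out a genuine case analysis of what survives a pin, falling back on Lemma~\ref{lemma:proj-eq} when $\Req$, $\Rneq$ or $\Rimp$ appears.
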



\section{Classifying relations}
\label{sec:Trichotomy}

We are now ready to prove that every Boolean relation $R$ is in
\ORconj{}, in \NANDconj{} or 3-simulates equality. If
$R_0$ and $R_1$ are $r$-ary, let $R_0+R_1 =
\{0\abar\mid \abar\in R_0\} \cup \{1\abar\mid \abar\in R_1\}$.

\begin{lemma}
\label{lemma:OR-OR}
    Let $R_0, R_1\in\ORconj$ and let $R=R_0+R_1$. Then $R\in\ORconj$,
    $R\in\NANDconj$ or $R$ 3-simulates equality.
\end{lemma}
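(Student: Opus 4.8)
The plan is to analyze the structure of $R = R_0 + R_1$ where $R_0, R_1 \in \ORconj$, using Proposition~\ref{prop:OR-monotone}. Each $R_i$ is pseudo-monotone, so its restriction to non-constant columns is monotone. The difficulty is that the constant columns of $R_0$ and $R_1$ need not agree, and even when they do, the column in which we inserted the discriminating bit behaves differently. I would first dispose of easy cases: if $R_0 = R_1$, then $R = \Req + R_0$-style, and in fact $R \in \ORconj$ iff the first column can be absorbed; more usefully, if $R_0 = \emptyset$ then $R = \{1\abar \mid \abar \in R_1\}$, whose first column is constant-$1$, so pseudo-monotonicity of $R$ reduces to that of $R_1$, giving $R \in \ORconj$; symmetrically if $R_1 = \emptyset$. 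Likewise if $R_0 = \Bool^r$ or $R_1 = \Bool^r$ we can check directly. So assume both $R_0, R_1$ are nonempty and proper.

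Next I would look at whether $R$ itself is pseudo-monotone or pseudo-antitone; if so, Proposition~\ref{prop:OR-monotone} places it in $\ORconj$ or $\NANDconj$ and we are done. So assume $R$ is neither. The aim is then to extract, via pinning and projection, one of the ``equality-generating'' relations $\Req$, $\Rneq$, $\Rimp$, or a pair $(\Ror\text{-type}, \Rnand\text{-type})$, and invoke Lemma~\ref{lemma:proj-eq} or Lemma~\ref{lemma:OR-NAND-eq}. Failure of pseudo-monotonicity of $R$ means there are tuples $\abar \in R$, $\bbar \notin R$ with $\abar \leq \bbar$ and $\abar, \bbar$ agreeing on the constant columns of $R$. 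I would split on the leading bit. If $\abar = 1\abar'$ and $\bbar = 1\bbar'$ then $\abar', \bbar'$ witness failure of pseudo-monotonicity of $R_1$, contradicting $R_1 \in \ORconj$ unless the differing coordinate is one where $R_1$ is constant but $R$ is not — this can only happen because $R_0$ is non-constant there, which I would exploit. The genuinely mixed case is $\abar = 0\abar' \in R$ (so $\abar' \in R_0$), $\bbar = 1\bbar' \notin R$ (so $\bbar' \notin R_1$), with $\abar' \leq \bbar'$. Combined symmetrically with the failure of pseudo-antitonicity — tuples $\cbar \in R$, $\dbar \notin R$ with $\dbar \leq \cbar$ — I expect to pin down enough coordinates to project $R$ onto a binary relation on $\{$first column, one well-chosen other column$\}$ that is one of $\Req, \Rneq, \Rimp, \Rpmi$, or onto the first column together with two others giving an $\Ror$/$\Rnand$ pair.

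The main obstacle will be the bookkeeping needed to \emph{align} the constant columns: $R_0$ may be constant on a coordinate where $R_1$ is not (and vice versa), so a column that is non-constant in $R$ as a whole may still be ``effectively pinned'' on one side, and the witnesses to non-pseudo-monotonicity must be chosen to respect this. I would handle it by first normalizing: pin every coordinate of $R$ (other than the first) on which $R$ is constant — these contribute nothing — and then argue that on the surviving coordinates, at least one of $R_0, R_1$ is non-constant in each, so the defining $\OR$-clauses interact. Then the failure of both pseudo-monotonicity and pseudo-antitonicity of $R$, pushed through Proposition~\ref{prop:OR-monotone} applied to $R_0$ and $R_1$ separately, forces the existence of a coordinate $j \geq 2$ such that pinning all other coordinates appropriately leaves exactly the pattern of $\Req$, $\Rneq$, $\Rimp$, or $\Rpmi$ (the last being $\Rimp$ with columns swapped) on columns $1$ and $j$ — or, if no single coordinate works, a pair of coordinates realizing $\Ror$ on one side and $\Rnand$ on the other, after which Lemma~\ref{lemma:OR-NAND-eq} applies. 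The case analysis is finite but delicate; I expect the $\Rimp$/$\Rpmi$ case to be the subtle one, arising precisely when $R_0$ and $R_1$ are nested ($R_0 \subsetneq R_1$ or $R_1 \subsetneq R_0$ after aligning columns) in a way that is neither pseudo-monotone nor pseudo-antitone as a whole.
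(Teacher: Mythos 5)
Your reduction of the problem to ``assume $R$ is neither pseudo-monotone nor pseudo-antitone and extract a hard relation'' is a reasonable framing (Proposition~\ref{prop:OR-monotone} does dispose of the $\ORconj$/$\NANDconj$ outcomes), but the proposal stops exactly where the lemma's real content begins. The claim that failure of both properties ``forces the existence of a coordinate $j$ such that pinning all other coordinates appropriately leaves exactly the pattern of $\Req$, $\Rneq$, $\Rimp$ or $\Rpmi$ --- or, if no single coordinate works, a pair of coordinates realizing $\Ror$ on one side and $\Rnand$ on the other'' is precisely what has to be proved, and you only assert it (``I expect to pin down enough coordinates'', ``the case analysis is finite but delicate''). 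The difficulty is concrete: your witnesses $\abar\in R$, $\bbar\notin R$ with $\abar\leq\bbar$ may differ in many coordinates, and when you pin the coordinates on which they agree you are left not with a binary relation but with a relation of arity equal to the number of differing coordinates (plus the distinguished first column); other tuples of $R$ surviving the pinning can fill in patterns arbitrarily, so it is not automatic that any further pinning of all but one more coordinate yields one of the six binary targets. The paper's proof confronts exactly this: in its key sub-case (where $R_0\nsubseteq R_1$, $R_1$ has no constant columns and positive width) it does \emph{not} descend to a binary projection at all, but produces a higher-arity relation that is $0$- and $1$-valid yet incomplete and invokes Lemma~\ref{lemma:valid-eq} --- a tool entirely absent from your plan. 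Your toolkit (Lemmas \ref{lemma:proj-eq} and~\ref{lemma:OR-NAND-eq} only) therefore does not obviously cover the cases you would actually encounter, and nothing in the proposal shows how to handle them.

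A second, related gap is that the case analysis you gesture at is not organized around the parameters that actually control the outcome. The paper's proof splits on whether $R_0\subseteq R_1$, on which columns of $R_0$ and $R_1$ are constant (a constant-zero column of $R_0$ sitting inside a non-constant column of $R_1$ is what produces $\Rpmi$ and hence hardness, while constant-one columns leave $R$ in $\ORconj$), and on whether the widths are zero; these distinctions are what make the ``nested'' situation you flag as subtle tractable. Your plan instead hopes to combine a monotonicity witness and an antitonicity witness ``symmetrically'' and read off a binary pattern, with the alignment of constant columns handled by an unproved ``which I would exploit''. It is plausible that your route could be completed (one can in fact show that a $0$- and $1$-valid incomplete relation ppp-defines one of $\Req$, $\Rneq$, $\Rimp$, $\Rpmi$, which would substitute for Lemma~\ref{lemma:valid-eq}), but as written the proposal is a strategy with the decisive step missing, not a proof.
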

\begin{proof}
    Let $R_0$ and $R_1$ have arity $r$.  We may assume that $R$ has no
    constant columns.  If it does, let $R'$ be the relation that
    results from projecting them away. $R' = R'_0 + R'_1$, where both
    $R'_0$ and $R'_1$ are $\ORconj$ relations. By the remainder of the
    proof, $R'\in\ORconj$, $R'\in \NANDconj$ or $R'$ 3-simulates
    equality.  Re-instating the constant columns does not alter this.
    For $R$ without constant columns, there are two cases.

    \smallskip
    \noindent
    \emph{{\bf Case 1.} $R_0\subseteq R_1$.}  Suppose $R_i$ is defined
        by the normalized \ORconj{} formula $\phi_i$ in variables
        $x_2, \dots, x_{r+1}$.  Then $R$ is defined by the formula
        \begin{equation}
            \phi_0 \vee (x_1=1 \wedge \phi_1)
            \equiv (\phi_0 \vee x_1=1)
                \wedge (\phi_0 \vee \phi_1)
            \equiv (\phi_0 \vee x_1=1) \wedge \phi_1\,,
                \label{eq:OR-conj-formula}
        \end{equation}
        where the second equivalence is because $\phi_0$ implies
        $\phi_1$, because $R_0\subseteq R_1$. $R_1$ has no constant
        column, since such a column would have to be constant with the
        same value in $R_0$, contradicting our assumption that $R$ has
        no constant columns.  There are two cases.

    \smallskip
    \noindent
    \emph{{\bf Case 1.1.} $R_0$ has no constant columns.}  $x_1=1$ is
        equivalent to $\OR(x_1)$ and $\phi_0$ contains no pins, so we
        can rewrite $\phi_0 \vee x_1=1$ in CNF.  Therefore,
        (\ref{eq:OR-conj-formula}) is \ORconj{}.

    \smallskip
    \noindent
    \emph{{\bf Case 1.2.} $R_0$ has a constant column.}  Suppose first
        that the $k$th column of $R_0$ is constant-zero.  $R_1$ has no
        constant columns, so the projection of $R$ onto its first and
        $(k+1)$st columns gives the relation $\Rpmi$, and $R$
        3-simulates equality by Lemma~\ref{lemma:proj-eq}. Otherwise,
        all constant columns of $R_0$ contain ones. Then $\phi_0$ is
        in CNF, since every pin $x_i=1$ in $\phi_0$ can be written
        $\OR(x_i)$. Thus, we can write $\phi_0 \vee x_1=1$ in
        CNF, so (\ref{eq:OR-conj-formula}) defines
        an \ORconj{} relation.

    \smallskip
    \noindent
    \emph{{\bf Case 2.} $R_0\nsubseteq R_1$.}
        We will show that $R$
        3-simulates equality or is in \NANDconj{}. We consider two
        cases (recall that no relation has width~1).

    \smallskip
    \noindent
    \emph{{\bf Case 2.1.} At least one of $R_0$ and $R_1$ has positive width.}
        There are two sub-cases.

    \smallskip
    \noindent
    \emph{{\bf Case 2.1.1.} $R_1$ has a constant column.}
        Suppose the $k$th column of $R_1$ is constant. If the $k$th column of $R_0$ is also constant, then the projection of $R$ to its first and $(k+1)$st columns is either equality or disequality (since the corresponding column of $R$ is not constant) so $R$ 3-simulates equality by Lemma~\ref{lemma:proj-eq}.  Otherwise, if the projection of $R$ to the first and $(k+1)$st columns is $\Rimp$, then $R$ 3-simulates equality by Lemma~\ref{lemma:proj-eq}.  Otherwise, that projection must be $\Rnand$.  By Lemma~\ref{lemma:ORconj-OR} and the assumption of Case~2.1, $\Ror$ is ppp-definable in at least one of $R_0$ and $R_1$ so $R$ 3-simulates equality by Lemma~\ref{lemma:OR-NAND-eq}.

    \smallskip
    \noindent
    \emph{{\bf Case 2.1.2.} $R_1$ has no constant columns.}
        By Proposition~\ref{prop:OR-monotone}, $R_1$ is monotone. Let $\abar\in R_0\setminus R_1$: by applying the same permutation to the columns of $R_0$ and $R_1$, we may assume that $\abar = 0^\ell 1^{r-\ell}$.  We must have $\ell\geq 1$ as every non-empty $r$-ary monotone relation contains the tuple $1^r\!$.  Let $\bbar\in R_1$ be a tuple such that $a_i=b_i$ for a maximal initial segment of $[1,r]$.  By monotonicity of $R_1$, we may assume that $\bbar = 0^k 1^{r-k}$.  Further, we must have $k<\ell$, since, otherwise, we would have $\bbar<\abar$, contradicting our choice of $\abar\notin R_1$.

        Now, consider the relation
            $R' = \{a_0 a_1\dots a_{\ell-k}\mid
                            a_00^ka_1\dots a_{\ell-k}1^{r-\ell} \in R\}$,
which is the result of pinning columns 2 to $(k+1)$ of $R$ to zero and columns $(r-\ell+1)$ to $(r+1)$ to one and discarding the resulting constant columns.  $R'$ contains $0^{\ell-k+1}$ and $1^{\ell-k+1}$ but is not complete, since $10^{\ell-k}\notin R'\!$.  By Lemma~\ref{lemma:valid-eq}, $R'$ and, hence, $R$ 3-simulates equality.

    \smallskip
    \noindent
    \emph{{\bf Case 2.2.} Both $R_0$ and $R_1$ have width zero,} i.e.,
        are complete relations, possibly padded with constant
        columns. For $i \in [1,r]$, let $R'_i$ be the relation
        obtained from $R$ by projecting onto its first and $(i+1)$st
        columns. Since $R$ has no constant columns, $R'_i$ is either
        complete, $\Req$, $\Rneq$, $\Ror$, $\Rnand$, $\Rimp$ or
        $\Rpmi$. If there is a $k$ such that $R'_k$ is $\Req$,
        $\Rneq$, $\Rimp$ or $\Rpmi$, then $\Req$, $\Rneq$ or $\Rimp$ is
        ppp-definable in $R$ and hence $R$ 3-simulates equality by
        Lemma~\ref{lemma:proj-eq}. If there are $k_1$ and $k_2$ such
        that $R'_{k_1} = \Ror$ and $R'_{k_2} = \Rnand$, then $R$
        3-simulates equality by Lemma~\ref{lemma:OR-NAND-eq}. It
        remains to consider the following two cases.

    \smallskip
    \noindent
    \emph{{\bf Case 2.2.1.} Each $R'_i$ is either $\Ror$ or complete.}
        $R_1$ must be complete, which contradicts the assumption that $R_0\not\subseteq R_1$.

    \smallskip
    \noindent
    \emph{{\bf Case 2.2.1.} Each $R'_i$ is either $\Rnand$ or complete.}
        $R_0$ must be complete. Let $I = \{i\mid R'_i=\Rnand\}$. Then
        $R = \bigwedge_{i\in I} \NAND(x_1,x_{i+1})$, so $R\in \NANDconj$.
\end{proof}

Using the duality between \ORconj{} and \NANDconj{} relations, we can
prove the corresponding result for $R_0, R_1\in \NANDconj$.  The proof
of the classification is completed by a simple induction on the arity
of $R$.  Decomposing $R$ as $R_0+R_1$ and assuming inductively that
$R_0$ and $R_1$ are of one of the stated types, we use the previous
results in this section and Lemma~\ref{lemma:OR-NAND-eq} to show that
$R$ is.

\begin{theorem}
\label{thrm:trichotomy}
    Every Boolean relation is \ORconj{} or \NANDconj{} or
    3-simulates equality.\qed
\end{theorem}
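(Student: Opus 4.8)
The plan is to prove Theorem~\ref{thrm:trichotomy} by induction on the arity $r$ of $R$, with Lemma~\ref{lemma:OR-OR} and its $\NANDconj$-dual doing all the combinatorial work in the inductive step. For the base case $r=0$ there are only the two relations $\emptyset$ and $\{\tuple{}\}$, both vacuously pseudo-monotone and hence in $\ORconj$ by Proposition~\ref{prop:OR-monotone}. For $r\geq 1$, I would split $R$ on its first coordinate, writing $R=R_0+R_1$, where $R_i=\{a_2\dots a_r\mid i\,a_2\dots a_r\in R\}$ is obtained from $R$ by pinning column~$1$ to~$i$ and projecting it away; thus $R_i\pppleq R$ and $R_i$ has arity $r-1$, so by the inductive hypothesis each of $R_0,R_1$ is in $\ORconj$, is in $\NANDconj$, or 3-simulates equality.

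Before the case analysis I would record two easy preservation facts. First, if $R'\pppleq R$ and $R'$ 3-simulates equality, then so does $R$: re-expanding $R'$ back into $R$ inside a simulating instance means inserting constants (for pinned columns) and fresh degree-one variables (for projected columns), neither of which raises the degree of the instance above~$3$ or the degree of $x_1,\dots,x_k$ above~$2$, so the simulation of each $\Reqk$ survives. Second, bitwise complement interchanges $\ORconj$ with $\NANDconj$ (by Proposition~\ref{prop:OR-monotone}, since it swaps monotone and antitone) and preserves the property of 3-simulating equality (flip every value in a simulating instance and swap $\Rzero$ with $\Rone$); moreover $\bwcomp{R_0+R_1}=\bwcomp{R_1}+\bwcomp{R_0}$.

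The case analysis is then short. If $R_0$ or $R_1$ 3-simulates equality, so does $R$ by the first preservation fact, using $R_i\pppleq R$. Otherwise both $R_0,R_1\in\ORconj\cup\NANDconj$. If both are in $\ORconj$, apply Lemma~\ref{lemma:OR-OR} directly. If both are in $\NANDconj$, apply Lemma~\ref{lemma:OR-OR} to $\bwcomp R=\bwcomp{R_1}+\bwcomp{R_0}$, whose summands are in $\ORconj$, and push the conclusion back through bitwise complement. The only remaining possibility is that one of $R_0,R_1$ lies in $\ORconj\setminus\NANDconj$ and the other in $\NANDconj\setminus\ORconj$ (say $R_0\in\ORconj\setminus\NANDconj$; the other case is symmetric), and I claim $R$ then 3-simulates equality. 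An $\ORconj$ relation not in $\NANDconj$ cannot have width~$0$ (a width-$0$ $\ORconj$ relation is defined by pins alone, hence is also in $\NANDconj$), and there are no relations of width~$1$, so $R_0$ has width at least~$2$ and Lemma~\ref{lemma:ORconj-OR} gives $\Ror\pppleq R_0$; dually $\Rnand\pppleq R_1$. With $R_0,R_1\pppleq R$ this yields $\Ror\pppleq R$ and $\Rnand\pppleq R$, and Lemma~\ref{lemma:OR-NAND-eq}, applied with both of its relations taken to be $R$ itself, shows that $R$ 3-simulates equality. This closes the induction.

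I expect the main obstacle here to be organisational rather than mathematical: Lemma~\ref{lemma:OR-OR} already carries all of the substance, so the care required is in getting the two preservation facts right and, above all, in carving the cases so that the mixed case genuinely collapses --- via Lemma~\ref{lemma:ORconj-OR} and Lemma~\ref{lemma:OR-NAND-eq} --- to exhibiting $\Ror$ and $\Rnand$ as ppp-definable sub-relations of $R$. No deeper difficulty is anticipated beyond what Section~\ref{sec:SimEq} and the proof of Lemma~\ref{lemma:OR-OR} already handle.
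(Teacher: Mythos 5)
Your overall strategy is the same as the paper's: the published proof is exactly a sketch of this induction on arity, decomposing $R=R_0+R_1$ on the first coordinate and invoking Lemma~\ref{lemma:OR-OR}, its \NANDconj{} dual (via bitwise complement), and Lemma~\ref{lemma:OR-NAND-eq} for the mixed case. Your handling of the mixed case is sound as written: you correctly route it through ppp-transitivity ($\Ror\pppleq R_0\pppleq R$, $\Rnand\pppleq R_1\pppleq R$) and then apply Lemma~\ref{lemma:OR-NAND-eq} with both of its relations equal to $R$, which the paper explicitly permits; the complementation facts are also fine.

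The one step that is not right as stated is your first ``preservation fact'': that $R'\pppleq R$ and $R'$ 3-simulates equality imply $R$ does, justified by re-expanding each $R'$-constraint with constants for pinned columns and fresh degree-one variables for projected columns. Degree bookkeeping is not the issue; the definition of $d$-simulating $\Reqk$ demands \emph{exactly the same number} $m$ of solutions in the all-zero and all-one branches, and fresh witness variables for projected non-constant columns multiply each solution by an assignment-dependent factor, which can destroy that balance. For instance, with $R=\{000,110,111\}$ projection of the last column gives $\Req$, but replacing $\Req(x,y)$ by $R(x,y,w)$ in a $k$-cycle yields $1$ all-zero solution against $2^k$ all-one solutions, so the resulting instance does not simulate $\Reqk$ in the paper's sense. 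This imbalance is precisely the subtlety that the $\alpha=\gamma$ versus $\alpha\neq\gamma$ analysis (and the inner induction) in the proof of Lemma~\ref{lemma:proj-eq} exists to repair, so it cannot be waved away in general. Fortunately you only ever apply the fact with $R'=R_i$, which is obtained from $R$ by pinning column~$1$ and projecting away that now-constant column: there each solution of the $R_i$-instance lifts to exactly one solution of the $R$-instance (insert the constant, or a fresh variable pinned by $\Rzero$ or $\Rone$), so both branch counts are preserved verbatim and the degree of the auxiliary variable is $2$. Restrict the claim to this situation (projection of constant columns only) and your induction closes; as stated and justified, the general claim is a gap.
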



\section{Complexity}
\label{sec:Complexity}

The complexity of approximating $\numCSP(\Gamma)$ where the degree of
instances is unbounded is given by Dyer, Goldberg and Jerrum
\cite[Theorem~3]{DGJ2007:Bool-approx}.

\begin{theorem}
\label{thrm:unbounded}
    Let $\Gamma$ be a Boolean constraint language.
    \begin{itemize}
    \itemspacing
    \item If every $R\in\Gamma$ is affine, then $\numCSP(\Gamma)\in
        \FPtime$.
    \item Otherwise, if $\Gamma\subseteq\IMconj$, then $\numCSP(\Gamma)
        \APequiv \numBIS$.
    \item Otherwise, $\numCSP(\Gamma) \APequiv \numSAT$.
    \end{itemize}
\end{theorem}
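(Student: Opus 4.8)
The plan is to treat the three cases in turn, reducing the two hardness cases to a handful of base problems whose approximation complexity is settled in \cite{DGGJ2004:Approx}. The affine case is classical: if every $R\in\Gamma$ is affine, then a $\CSP(\Gamma)$ instance $I=(V,C)$ is, after replacing each constraint by its defining linear system, a single system of linear equations over $\GFtwo$ in the variables $V$; Gaussian elimination either detects infeasibility, giving $Z(I)=0$, or returns the dimension $t$ of the solution space, giving $Z(I)=2^t$. This is the Creignou--Hermann algorithm \cite{CH1996:Bool-numCSP} and puts $\numCSP(\Gamma)$ in $\FPtime$.

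Next I would dispose of the easy half of each remaining case. Since $\numCSP(\Gamma)\in\numP$ for every $\Gamma$ and $\numSAT$ is complete for $\numP$ under AP-reductions \cite{DGGJ2004:Approx}, $\numCSP(\Gamma)\APredto\numSAT$ always holds, so in the third case only $\numSAT\APredto\numCSP(\Gamma)$ remains. When $\Gamma\subseteq\IMconj$, every constraint is a conjunction of pins and binary implications, so an instance reduces --- by unit-propagating the pins and then contracting the strongly connected components of the implication digraph --- to counting the down-sets of a partial order, which is AP-interreducible with $\numBIS$ \cite{DGGJ2004:Approx}; hence $\numCSP(\Gamma)\APredto\numBIS$ in the second case.

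For the matching lower bounds I would extract hard sub-relations by ppp-definability. In the second case, pick a non-affine $R\in\Gamma$; by Lemma~\ref{lemma:IMconj-implies}, $\Rimp\pppleq R$, so $R$ (together with pins) realizes implication constraints, and an arbitrary $\numBIS$ instance, read as the problem of counting down-sets of a bipartite poset, is assembled from these, giving $\numBIS\APredto\numCSP(\Gamma)$. In the third case, fix a non-affine $R\in\Gamma$ and some $S\in\Gamma$ with $S\notin\IMconj$ (possibly $S=R$), and aim to ppp-extract from $\{R,S\}$ and pins one of a short list of $\numSAT$-hard relations --- chiefly the binary $\OR$ relation $\Rork[2]$, so that $\numCSP(\Gamma)$ subsumes counting independent sets in graphs, which is $\numSAT$-hard \cite{DGGJ2004:Approx}, or else $\Rnandk[2]$ by bit-complementation. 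By Theorem~\ref{thrm:trichotomy}, $S\in\ORconj$, $S\in\NANDconj$, or $S$ $3$-simulates equality; in the first two subcases $S\notin\IMconj$ forces width $\geq 2$, so $\Rork[2]\pppleq S$ (respectively $\Rnandk[2]\pppleq S$) by Lemma~\ref{lemma:ORconj-OR}. The remaining relations are handled through the Boolean co-clone lattice: the failure of pseudo-monotonicity and pseudo-antitonicity (Proposition~\ref{prop:OR-monotone}), together with a non-affine relation and pins, always yields a copy of $\Rork[2]$ or $\Rnandk[2]$ --- for instance, the disequality relation $\Rneq$ negates a variable and so turns an implication into a $\NAND$. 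Composing these AP-reductions with the hardness of counting independent sets finishes both directions.

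The main obstacle is that the hypothesis does not supply the pinning relations $\Rzero,\Rone$, whereas every ppp-extraction above uses pinning. I would resolve this by first showing that, outside the affine case, $\numCSP(\Gamma)\APequiv\numCSP(\Gamma\cup\GammaPin)$ --- i.e.\ that pins are AP-simulable --- via a separate bootstrapping gadget (for instance, amplifying a bias by feeding many copies of a variable through a non-affine constraint and then dividing out the resulting normalising factor), and then running the arguments above in the pinned setting. Getting this pin-simulation right, and carrying out the exhaustive co-clone case analysis for the residual hard relations --- while being careful that purely affine obstructions such as $\Rneq$ become $\numSAT$-hard only in combination with a non-affine relation --- is where essentially all of the difficulty lies; everything else is the assembly of known AP-reductions.
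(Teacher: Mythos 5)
First, note that the paper does not prove Theorem~\ref{thrm:unbounded} at all: it is imported verbatim from \cite{DGJ2007:Bool-approx} (Theorem~3 there), and the present paper only uses it as a black box. So your proposal is not an alternative to an internal argument but an attempt to reprove that external trichotomy, and judged on its own terms it has real gaps. The decisive one is the issue you yourself flag and then defer: $\numCSP(\Gamma)$ does not come with $\Rzero,\Rone$, yet every hardness step you describe (extracting $\Rimp$ via Lemma~\ref{lemma:IMconj-implies}, extracting $\Rork[2]$ or $\Rnandk[2]$ via Lemma~\ref{lemma:ORconj-OR}, the co-clone analysis) is a ppp-extraction and so needs pins. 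Establishing $\numCSP(\Gamma\cup\GammaPin)\APredto\numCSP(\Gamma)$ outside the affine case is precisely the technical heart of the Dyer--Goldberg--Jerrum proof, and your proposed gadget (``amplify a bias through a non-affine constraint and divide out the normalising factor'') does not work as stated as an AP-reduction: an oracle for $\numCSP(\Gamma)$ returns only a relative-error estimate of a sum in which the $0$-pinned and $1$-pinned contributions are mixed, and one cannot in general separate them by dividing out a factor; handling this requires the careful, case-split construction of \cite{DGJ2007:Bool-approx}. As written, the whole reduction chain collapses at this first link.

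Two further steps are also thinner than they look. Even granting pins, a ppp-definition $\Rimp\pppleq R$ only gives implication \emph{with multiplicities}: in your encoding of a $\numBIS$ (downset) instance, each simulated implication contributes a factor $\alpha$, $\beta$ or $\gamma$ depending on which of $00,01,11$ is used, and these need not be equal, so the oracle answer is not a fixed multiple of the downset count; the reduction $\numBIS\APredto\numCSP(\Gamma)$ needs an extra argument (the same issue afflicts your $\Rork[2]$/$\Rnandk[2]$ extraction for the $\numSAT$ case). Finally, the claim that outside $\ORconj\cup\NANDconj$ the ``co-clone lattice'' always yields $\Rork[2]$ or $\Rnandk[2]$ is only gestured at: Theorem~\ref{thrm:trichotomy} gives such relations $3$-simulation of equality, which is affine and by itself yields no approximation hardness, so the exhaustive analysis combining non-affineness with $\Gamma\nsubseteq\IMconj$ still constitutes the remaining bulk of the proof. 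The correct upper bounds (Creignou--Hermann for the affine case, $\numCSP(\Gamma)\APredto\numSAT$ by $\numP$-completeness of $\numSAT$, and the downset/$\numBIS$ equivalence for $\IMconj$) are fine, but they are the easy half.
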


Working towards our classification of the approximation complexity of
$\numCSP(\Gamma)$, we first deal with subcases.  The \IMconj{} case
and \ORconj{}/\NANDconj{} cases are based on links between those
classes of relations and the problems of counting independent sets in
bipartite and general graphs, respectively\cite{DGJ2007:Bool-approx,
DGGJ2004:Approx}, the latter extended to hypergraphs.

\begin{proposition}
\label{prop:im-bis}
    If $\Gamma \subseteq \IMconj$ contains at least one non-affine
    relation, then $\numCSPd(\Gamma\cup \GammaPin) \APequiv \numBIS$
    for all $d\geq 3$. \qed
\end{proposition}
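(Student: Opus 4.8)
The plan is to bracket $\numCSPd(\Gamma\cup\GammaPin)$ between $\numBIS$ on both sides, using only results already in hand, so that the proof amounts to checking that their hypotheses apply.

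For the lower bound $\numBIS\APred\numCSPd(\Gamma\cup\GammaPin)$, I would start from a non-affine relation $R\in\Gamma$. Lemma~\ref{lemma:IMconj-implies} gives $\Rimp\pppleq R$, so Lemma~\ref{lemma:proj-eq} shows that $R$, and hence $\Gamma$, $3$-simulates equality. A $3$-simulation of equality is witnessed by an instance of degree at most $3$ in which each of the variables $x_1,\dots,x_k$ has degree at most $2$, so it is also a $d$-simulation for every $d\ge3$; Proposition~\ref{prop:bound-unbound} then yields $\numCSP(\Gamma)\APred\numCSPd(\Gamma\cup\GammaPin)$. Since $\Gamma\subseteq\IMconj$ contains the non-affine relation $R$, Theorem~\ref{thrm:unbounded} gives $\numBIS\APequiv\numCSP(\Gamma)$, and composing $\numBIS\APred\numCSP(\Gamma)\APred\numCSPd(\Gamma\cup\GammaPin)$ completes this direction.

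For the upper bound $\numCSPd(\Gamma\cup\GammaPin)\APred\numBIS$, note that every instance of $\numCSPd(\Gamma\cup\GammaPin)$ is, verbatim, an instance of $\numCSP(\Gamma\cup\GammaPin)$ with the same number of satisfying assignments, so the identity map is an AP-reduction $\numCSPd(\Gamma\cup\GammaPin)\APred\numCSP(\Gamma\cup\GammaPin)$. The relations $\Rzero$ and $\Rone$ are affine and each is a single pin, hence lie in $\IMconj$; thus $\Gamma\cup\GammaPin\subseteq\IMconj$ and it still contains the non-affine relation $R$, so Theorem~\ref{thrm:unbounded} gives $\numCSP(\Gamma\cup\GammaPin)\APequiv\numBIS$, and in particular $\numCSP(\Gamma\cup\GammaPin)\APred\numBIS$. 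Chaining the two reductions finishes the proof.

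There is no real obstacle here: the proposition is a matter of assembling Lemmas~\ref{lemma:IMconj-implies} and~\ref{lemma:proj-eq}, Proposition~\ref{prop:bound-unbound}, and Theorem~\ref{thrm:unbounded} in the right order. The only point that deserves a sentence of justification is that a $3$-simulation of equality is automatically a $d$-simulation for all $d\ge3$, which is what lets Proposition~\ref{prop:bound-unbound} be invoked uniformly across all degree bounds $d\ge3$.
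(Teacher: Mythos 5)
Your proof is correct and assembles exactly the ingredients the paper provides for this purpose (Lemma~\ref{lemma:IMconj-implies} plus Lemma~\ref{lemma:proj-eq} for the lower bound via Proposition~\ref{prop:bound-unbound}, and the trivial restriction-to-unbounded reduction plus Theorem~\ref{thrm:unbounded} for the upper bound), which is the paper's intended argument for this proposition even though the proceedings version omits the proof body. Your observation that a $3$-simulation of equality serves as a $d$-simulation for all $d\geq 3$ is precisely how the paper itself uses these lemmas in the proof of Theorem~\ref{theorem:complexity}.
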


\begin{proposition}
\label{prop:HIS-to-ORconj}
    Let $R$ be an \ORconj{} or \NANDconj{} relation of width~$w$. Then,
    for $d\geq 2$, $\numwHISd{} \APredto \numCSPd(\{R\}\cup
    \GammaPin)$.\qed
\end{proposition}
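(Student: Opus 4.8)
The plan is to encode a width-$w$ hypergraph $H=(V,E)$ as a bounded-degree CSP instance over $\{R\}\cup\GammaPin$ whose satisfying assignments correspond exactly to the independent sets of $H$. I would treat the \ORconj{} case first; the \NANDconj{} case follows by a symmetric argument (complementing everything, using the bitwise-complement duality of Proposition~\ref{prop:OR-monotone}), so I will only sketch \ORconj{} in detail.

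Let $R$ be an \ORconj{} relation of width $w$, defined by its unique normalized formula $\phi$ (Lemma~\ref{lemma:conj-norm}). By Lemma~\ref{lemma:ORconj-OR}, $\Rork[2],\dots,\Rork[w]\pppleq R$, so for every $j$ with $2\le j\le w$ the $j$-ary \OR{} relation $\Rork[j]$ is ppp-definable in $R$: concretely, a constraint $\Rork[j](z_1,\dots,z_j)$ can be simulated by a single constraint $R(\dots)$ in which the $j$ "live" columns receive $z_1,\dots,z_j$, some columns are pinned to constants (realized as pins from $\GammaPin$, which do not affect the degree of the $z_i$), and the remaining columns receive \emph{fresh} variables used nowhere else. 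This costs each $z_i$ exactly one appearance. Now, given the hypergraph $H$, introduce one CSP variable $x_v$ per vertex $v\in V$, identify the assignment $\sigma$ with the set $S_\sigma=\{v:\sigma(x_v)=0\}$, and for each hyper-edge $e=\{v_1,\dots,v_j\}\in E$ (note $2\le j\le w$, since $H$ has width $w$; size-1 edges never occur because normalized formulae have no arity-1 ORs, but they can also be handled directly by a pin if one wished to allow them) impose the constraint $\Rork[j](x_{v_1},\dots,x_{v_j})$, simulated via $R$ as above. A tuple lies in $\Rork[j]$ iff it is not $0^j$, i.e.\ iff not all of $v_1,\dots,v_j$ are in $S_\sigma$, i.e.\ iff $e\nsubseteq S_\sigma$. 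Hence the satisfying assignments of the instance are in bijection with the independent sets of $H$, giving $Z(I)=\numwHIS(H)$ — an exact, parsimonious reduction, which is certainly approximation-preserving.

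It remains to check the degree bound. Each vertex variable $x_v$ appears once for each hyper-edge containing $v$, i.e.\ $d_H(v)$ times; each fresh "padding" variable appears once. So the instance has degree exactly $\max\{\,d_H(v):v\in V\,\}=d(H)\le d$. Thus for $d\ge 2$ the instance built from a degree-$d$ hypergraph has degree at most $d$, and the map $(H,\epsilon)\mapsto(I,\epsilon)$ together with the identity on outputs is an AP-reduction from $\numwHISd$ to $\numCSPd(\{R\}\cup\GammaPin)$. The \NANDconj{} case is identical after replacing $R$ by its role under the ORconj/NANDconj duality: there $\Rnandk[j]\pppleq R$ for $2\le j\le w$, a tuple is in $\Rnandk[j]$ iff it is not $1^j$, so identifying the independent set with $\{v:\sigma(x_v)=1\}$ and imposing $\Rnandk[j]$ on each edge again counts exactly the independent sets of $H$.

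The only genuine subtlety — and what I would flag as the main thing to get right — is the ppp-simulation of $\Rork[j]$ inside a \emph{single} application of $R$ using only fresh auxiliary variables for the projected-away columns, so that the degree of every hypergraph-vertex variable is exactly its hypergraph degree and no hidden extra occurrences creep in. This is exactly the content of Lemma~\ref{lemma:ORconj-OR} unpacked through the structure of $\pppleq$ (permute, then pin, then project), so it is routine but must be stated carefully; everything else is bookkeeping.
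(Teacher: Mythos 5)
There is a genuine gap, and it is exactly at the point you flagged as ``routine'': realizing the ppp-definition of $\Rork[j]$ by a single constraint $R(\dots)$ in which the projected-away columns are filled with fresh, otherwise-unused variables does \emph{not} give a parsimonious simulation. Projection in a CSP gadget is existential quantification with multiplicity: for each tuple $\tuple{z_1,\dots,z_j}\in\Rork[j]$ the number of extensions to the fresh variables can differ from tuple to tuple. For instance, if $R$ is defined by $\OR(x_1,x_2)\wedge\OR(x_2,x_3)$ and you project away $x_3$, the tuple $01$ has two extensions while $10$ has only one. Consequently $Z(I)$ is not $\numwHIS(H)$ but a weighted sum $\sum_{S}\prod_{e}N_e(\sigma|_e)$ with tuple-dependent weights, and the identity map on outputs is not an AP-reduction; nothing in Lemma~\ref{lemma:ORconj-OR} (which only asserts ppp-definability of the relation, not uniformity of extension counts) rescues this, and elsewhere in the paper (e.g.\ the proof of Lemma~\ref{lemma:proj-eq}) the extension counts $\alpha,\gamma$ are tracked explicitly for precisely this reason.

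The repair is to avoid projecting non-constant columns altogether and use only pins, exploiting the normalized formula. Let $\phi$ be the normalized \ORconj{} formula for $R$ and pick an \OR{} of maximum arity $w$, with argument set $U$. Pin every variable outside $U$ to its value if it is pinned in $\phi$, and to $1$ otherwise; since in a normalized formula no \OR{}'s argument set is contained in another's, every other \OR{} has an argument outside $U$ and is thereby satisfied, so the induced relation on $U$ is exactly $\Rork[w]$, with each tuple having a \emph{unique} extension. A hyper-edge of size $s\leq w$ is then encoded by one constraint $R(\dots)$ with its $s$ vertex variables in $s$ positions of $U$, the remaining $w-s$ positions of $U$ pinned to $0$, and the outside positions pinned as above (size-$1$ edges by a direct pin, as you note). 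This is parsimonious, each vertex variable occurs once per incident edge, and the constants are realized by fresh pinned variables of degree $2$, which is why $d\geq 2$ suffices; the \NANDconj{} case is dual (pin outside variables to $0$, unused positions of the widest \NAND{} to $1$). Your degree bookkeeping and the overall reduction scheme are otherwise fine.
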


\begin{proposition}
\label{prop:ORconj-to-HIS}
    Let $R$ be an \ORconj{} or \NANDconj{} relation of width~$w$. Then, for $d\geq 2$, $\numCSPd(\{R\}\cup \GammaPin) \APredto \numwHISd[kd]$, where $k$ is the greatest number of times that any variable appears in the normalized formula defining $R$. \qed
\end{proposition}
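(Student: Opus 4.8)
The plan is to reduce to the \NANDconj{} case and then read a hypergraph directly off the given CSP instance. First, it suffices to treat $R\in\NANDconj$: if $R\in\ORconj$, then the bit-wise complement $\bwcomp{R}$ lies in \NANDconj{}, its normalised formula being obtained from that of $R$ by turning each \OR{} into the \NAND{} on the same arguments and flipping the pins, so $\bwcomp{R}$ has the same width $w$ and the same parameter $k$; moreover, replacing every constraint $R(\vbar)$ by $\bwcomp{R}(\vbar)$ and flipping every pin sends a degree-$d$ instance $I$ over $\{R\}\cup\GammaPin$ to a degree-$d$ instance $\tilde I$ over $\{\bwcomp{R}\}\cup\GammaPin$ with $Z(\tilde I)=Z(I)$, the map $\sigma\mapsto 1-\sigma$ being a bijection between their solution sets. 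So assume $R\in\NANDconj$.

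Given a degree-$d$ instance $I=(V,C)$ over $\{R\}\cup\GammaPin$, I would expand each constraint $R(\vbar)$ using the unique normalised formula of $R$ (Lemma~\ref{lemma:conj-norm}): substituting the variables of $\vbar$ for the formula variables turns $R(\vbar)$ into a set of pins and \NAND{}s of arity at most $w$ on the variables occurring in $\vbar$ (substitution may create repeated arguments; deleting them only lowers arities). Gather these pins together with the pins of $C$ and propagate them: a variable pinned to $0$ satisfies every \NAND{} it occurs in, so those \NAND{}s are deleted; a variable pinned to $1$ is deleted from the argument list of every \NAND{} it occurs in; an arity-$1$ \NAND{} $\NAND(y)$ forces $y\mapsto 0$, so that pin is recorded and propagation repeats; an empty \NAND{}, or a variable pinned to both $0$ and $1$, witnesses $Z(I)=0$. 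This halts within $|V|$ rounds, leaving a set $U$ of unpinned variables and a collection $\mathcal N$ of \NAND{}s of arity between $2$ and $w$ on $U$ such that $Z(I)$ equals the number of assignments $\tau\colon U\to\Bool$ satisfying $\mathcal N$. If instead a contradiction was found, $Z(I)=0$; if $U=\emptyset$, then $Z(I)=1$; in these two cases the reduction outputs the value without consulting the oracle.

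Otherwise, I would form the width-$w$ hypergraph $H=(U,E)$ whose hyper-edges are the distinct argument sets of the \NAND{}s in $\mathcal N$. Since $\tau$ satisfies $\NAND(y_1,\dots,y_s)$ exactly when $\{y_1,\dots,y_s\}\nsubseteq\tau^{-1}(1)$, the map $\tau\mapsto\tau^{-1}(1)$ is a bijection between satisfying assignments of $\mathcal N$ and independent sets of $H$, so the number of independent sets of $H$ equals $Z(I)$ exactly. For the degree bound, note that a vertex $v$ occurs in $I$ at most $d$ times in total, and each occurrence --- say in position $j$ of some constraint $R(\vbar)$ --- places $v$ into at most $k$ \NAND{}-instances, since at most $k$ of the \NAND{}s of the normalised formula contain its $j$th variable; summing over the at most $d$ occurrences, $v$ lies in at most $kd$ \NAND{}-instances and hence in at most $kd$ hyper-edges, so $H$ has degree at most $kd$ and is a legal instance of $\numwHISd[kd]$. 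The reduction builds $H$ in time polynomial in $|I|$ (as $w$ and $k$ are constants depending only on the fixed relation $R$), makes one oracle call on $(H,\epsilon)$, and returns the answer; since the count is exact this is an AP-reduction.

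The bookkeeping --- expansion, deletion of repeated arguments, pin propagation and its corner cases --- is routine but must be handled carefully. The one genuinely delicate point is the degree accounting: I need to be sure that deleting repeated arguments and identifying \NAND{}s with equal argument sets can only decrease degrees (so the bound $kd$ is never exceeded), and that pin propagation never raises an arity above $w$ (it only deletes arguments, so it cannot). Everything else follows immediately once the bijection between satisfying assignments and independent sets is in hand.
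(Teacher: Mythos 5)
Your proof is correct and is essentially the intended argument (the paper omits the proof of this proposition, but the parameter $k$ in the statement reflects exactly your accounting): expand each constraint via the unique normalized formula, propagate pins, read off the width-$w$ hypergraph whose independent sets are in bijection with satisfying assignments, and observe that each of the at most $d$ occurrences of a variable puts it into at most $k$ \NAND{}s, giving degree at most $kd$. The reduction of the \ORconj{} case to the \NANDconj{} case via bit-wise complementation is also the standard move and is handled correctly, including preservation of $w$, $k$ and the degree bound.
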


We now give the complexity of approximating $\numCSPd(\Gamma\cup \GammaPin)$ for $d\geq 3$.

\begin{theorem}
\label{theorem:complexity}
    Let $\Gamma$ be a Boolean constraint language and let $d\geq 3$.
    \begin{itemize}
    \itemspacing
    \item If every $R\in\Gamma$ is affine, then $\numCSPd(\Gamma\cup \GammaPin) \in
        \FPtime$.
    \item Otherwise, if $\Gamma\subseteq \IMconj$, then
        $\numCSPd(\Gamma\cup \GammaPin) \APequiv \numBIS$.
    \item Otherwise, if $\Gamma\subseteq \ORconj$ or $\Gamma\subseteq
        \NANDconj$, then let $w$ be the greatest width of any relation in $\Gamma$ and let $k$ be the greatest number of times that any variable appears in the normalized formulae defining the relations of $\Gamma$. Then $\numwHISd \APredto \numCSPd(\Gamma\cup \GammaPin) \APredto \numwHISd[kd]$.
    \item Otherwise, $\numCSPd(\Gamma\cup \GammaPin) \APequiv \numSAT$.
    \end{itemize}
\end{theorem}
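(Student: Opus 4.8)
The plan is to combine the trichotomy for individual relations (Theorem~\ref{thrm:trichotomy}) with the subcase results (Propositions~\ref{prop:im-bis}, \ref{prop:HIS-to-ORconj} and~\ref{prop:ORconj-to-HIS}) and the unbounded-degree classification (Theorem~\ref{thrm:unbounded}), using $d$-simulation of equality (Proposition~\ref{prop:bound-unbound}) to transfer hardness from the unbounded setting. I would treat the four bullets in turn, working from easiest to hardest.

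For the first bullet, if every $R\in\Gamma$ is affine then so is every relation in $\GammaPin$, so $\numCSPd(\Gamma\cup\GammaPin)$ is a restriction of the problem $\numCSP(\Gamma\cup\GammaPin)$, which lies in $\FPtime$ by Theorem~\ref{thrm:unbounded} (or directly by Creignou--Hermann); restricting the instances can only help, so membership in $\FPtime$ is immediate. For the second bullet, $\Gamma\subseteq\IMconj$ with some non-affine relation present is exactly the hypothesis of Proposition~\ref{prop:im-bis}, which gives $\numCSPd(\Gamma\cup\GammaPin)\APequiv\numBIS$ for all $d\geq 3$, so there is nothing further to do. For the third bullet, suppose $\Gamma$ is not contained in $\IMconj$ but $\Gamma\subseteq\ORconj$ (the $\NANDconj$ case is symmetric). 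The upper bound $\numCSPd(\Gamma\cup\GammaPin)\APredto\numwHISd[kd]$ follows by applying Proposition~\ref{prop:ORconj-to-HIS} to each relation of $\Gamma$ and observing that the widths and repetition counts are bounded by $w$ and $k$ respectively; one must check that an $\APred$ to a bounded-width, bounded-degree hypergraph independent-set problem can be assembled constraint-by-constraint, which is routine since AP-reductions compose and the hypergraph parameters only take the maximum over constraints. The lower bound $\numwHISd\APredto\numCSPd(\Gamma\cup\GammaPin)$ comes from Proposition~\ref{prop:HIS-to-ORconj} applied to a relation in $\Gamma$ of maximal width $w$.

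The fourth bullet is where the real work lies. Here $\Gamma$ is not a subset of $\IMconj$, not a subset of $\ORconj$ and not a subset of $\NANDconj$. I first want to reduce $\numCSP(\Gamma\cup\GammaPin)$ (unbounded degree) to $\numCSPd(\Gamma\cup\GammaPin)$; by Proposition~\ref{prop:bound-unbound} it suffices to show that $\Gamma\cup\GammaPin$ $d$-simulates equality, and since the $3$-simulations in Section~\ref{sec:SimEq} have degree at most~$3$ and all use each $x_i$ at most twice, it is enough to exhibit one relation (or pair of relations) derived from $\Gamma$ that $3$-simulates equality. This is the crux: I would argue that if no single relation in $\Gamma$ $3$-simulates equality, then by Theorem~\ref{thrm:trichotomy} every relation in $\Gamma$ lies in $\ORconj\cup\NANDconj$; since $\Gamma\not\subseteq\ORconj$ and $\Gamma\not\subseteq\NANDconj$, there are relations $R\in\Gamma\cap\ORconj\setminus\NANDconj$ and $R'\in\Gamma\cap\NANDconj\setminus\ORconj$, and such an $R$ has positive width so $\Ror\pppleq R$ by Lemma~\ref{lemma:ORconj-OR}, while such an $R'$ has positive width so $\Rnand\pppleq R'$ by Lemma~\ref{lemma:ORconj-OR}; then Lemma~\ref{lemma:OR-NAND-eq} shows $\{R,R'\}$ $3$-simulates equality, which is enough because the simulation instance has degree~$3\le d$ and uses each $x_i$ at most twice. (One subtlety to handle carefully is relations of width zero, i.e.\ complete relations padded with constant columns; a complete relation is affine, and a padded one reduces to the complete case after projecting constant columns, so these do not obstruct the argument---but I would state this explicitly.) With equality $d$-simulated, Proposition~\ref{prop:bound-unbound} gives $\numCSP(\Gamma\cup\GammaPin)\APredto\numCSPd(\Gamma\cup\GammaPin)$; since $\Gamma\cup\GammaPin\not\subseteq\IMconj$ and contains a non-affine relation (a relation that $3$-simulates equality cannot be affine, else $\numCSP$ of it would be in $\FPtime$ yet simulate equality---more directly, $\Rimp,\Rneq,\Req$ and the relations producing them here are not all affine, and in any case $\Gamma\not\subseteq\IMconj$ already rules out the affine and $\IMconj$ bullets of Theorem~\ref{thrm:unbounded}), Theorem~\ref{thrm:unbounded} gives $\numCSP(\Gamma\cup\GammaPin)\APequiv\numSAT$, hence $\numSAT\APredto\numCSPd(\Gamma\cup\GammaPin)$. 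The reverse reduction $\numCSPd(\Gamma\cup\GammaPin)\APredto\numSAT$ holds because $\numSAT$ is complete for $\numP$ under AP-reductions (Dyer--Goldberg--Greenhill--Jerrum) and $\numCSPd(\Gamma\cup\GammaPin)$ is in $\numP$. Combining gives $\numCSPd(\Gamma\cup\GammaPin)\APequiv\numSAT$.

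The main obstacle, then, is the fourth bullet's equality-simulation step: one must verify that the case analysis of Theorem~\ref{thrm:trichotomy} can be pushed through so that whenever $\Gamma$ escapes all three structural classes, $\Gamma\cup\GammaPin$ actually $d$-simulates \emph{all} arities of equality (not just $\Req$), with the degree bookkeeping ($x_i$ used at most twice, total degree $\le 3\le d$) consistent with Proposition~\ref{prop:bound-unbound}. The width-zero/constant-column corner cases and the requirement that the relevant relation be non-affine (so the affine and $\IMconj$ bullets genuinely do not apply) are the places where I would be most careful.
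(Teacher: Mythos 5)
Your proposal is correct and follows essentially the same route as the paper: the first three bullets are handled by Theorem~\ref{thrm:unbounded} and Propositions~\ref{prop:im-bis}, \ref{prop:HIS-to-ORconj} and~\ref{prop:ORconj-to-HIS}, and the last bullet by showing $\Gamma$ 3-simulates (hence $d$-simulates) equality, either via a relation outside $\ORconj\cup\NANDconj$ and Theorem~\ref{thrm:trichotomy}, or via a width-$\geq 2$ $\ORconj$ relation together with a width-$\geq 2$ $\NANDconj$ relation and Lemma~\ref{lemma:OR-NAND-eq}, then invoking Proposition~\ref{prop:bound-unbound} and Theorem~\ref{thrm:unbounded}. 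The only cosmetic differences (justifying the easy direction via AP-completeness of \numSAT{} rather than the trivial restriction to the unbounded problem, and deriving width $\geq 2$ from $\ORconj\setminus\NANDconj$ membership rather than non-affineness) do not change the argument.
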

\begin{proof}
    The affine case is immediate from Theorem~\ref{thrm:unbounded}.
    ($\Gamma\cup \GammaPin$ is affine if, and only if, $\Gamma$ is.)
    Otherwise, if $\Gamma\subseteq \IMconj$ and some $R\in \Gamma$ is
    not affine, then $\numCSPd(\Gamma\cup \GammaPin) \APequiv \numBIS$
    by Proposition~\ref{prop:im-bis}.  Otherwise, if $\Gamma\subseteq
    \ORconj$ or $\Gamma\subseteq \NANDconj$, then $\numwHISd \APredto
    \numCSPd(\Gamma\cup \GammaPin) \APredto \numwHISd[kd]$ by
    Propositions \ref{prop:HIS-to-ORconj}
    and~\ref{prop:ORconj-to-HIS}.

    Finally, suppose that $\Gamma$ is not affine, $\Gamma\nsubseteq
    \IMconj$, $\Gamma\nsubseteq \ORconj$ and $\Gamma\nsubseteq
    \NANDconj$. Since $(\Gamma\cup \GammaPin)$ is neither affine or a
    subset of \IMconj{}, we have $\numCSP(\Gamma\cup
    \GammaPin)\APequiv \numSAT$ by Theorem~\ref{thrm:unbounded} so, if
    we can show that $\Gamma$ $d$-simulates equality, then
    $\numCSPd(\Gamma\cup \GammaPin) \APequiv \numCSP(\Gamma\cup
    \GammaPin)$ by Proposition~\ref{prop:bound-unbound} and we are
    done.  If $\Gamma$ contains a $R$ relation that is neither
    \ORconj{} nor \NANDconj{}, then $R$ 3-simulates equality by
    Theorem~\ref{thrm:trichotomy}.  Otherwise, $\Gamma$ must contain
    distinct relations $R_1\in\ORconj$ and $R_2\in\NANDconj$ that are
    non-affine so have width at least two.  So $\Gamma$ 3-simulates
    equality by Lemma~\ref{lemma:OR-NAND-eq}.
\end{proof}

Unless $\NPtime=\RPtime$, there is no FPRAS for counting independent
sets in graphs of maximum degree at least~25 \cite{DFJ2002:IS-sparse},
and, therefore, no FPRAS for $\numwHISd[d]$ with $r\geq 2$ and $d\geq
25$. Further, since $\numSAT$ is complete for $\numP$ under
AP-reductions \cite{DGGJ2004:Approx}, $\numSAT$ cannot have an FPRAS
unless $\NPtime=\RPtime$. From Theorem~\ref{theorem:complexity} above
we have the following corollary.

\begin{corollary}
\label{cor:main}
    Let $\Gamma$ be a Boolean constraint language and let $d\geq 25$.
    \begin{itemize}
    \itemspacing
    \item If every $R\in\Gamma$ is affine, then $\numCSPd(\Gamma\cup \GammaPin) \in \FPtime$.
    \item Otherwise, if $\Gamma\subseteq \IMconj$, then $\numCSPd(\Gamma\cup \GammaPin) \APequiv \numBIS$.
    \item Otherwise there is no FPRAS for $\numCSPd(\Gamma\cup
        \GammaPin)$, unless $\NPtime=\RPtime$.\hfill\qEd 
    \end{itemize}
\end{corollary}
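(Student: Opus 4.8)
The plan is to derive Corollary~\ref{cor:main} directly from Theorem~\ref{theorem:complexity} by observing that, when $d\geq 25$, the one genuinely new regime in Theorem~\ref{theorem:complexity} (the \ORconj{}/\NANDconj{} case, where the complexity is pinned between $\numwHISd$ and $\numwHISd[kd]$) collapses into the ``no FPRAS'' verdict, so that only the first two bullets of Corollary~\ref{cor:main} remain as separate cases and everything else becomes hard. First I would invoke Theorem~\ref{theorem:complexity} with the hypothesis $d\geq 25\geq 3$, which partitions constraint languages $\Gamma$ into the four stated cases. The affine case gives $\numCSPd(\Gamma\cup\GammaPin)\in\FPtime$ verbatim, and the $\Gamma\subseteq\IMconj$ (non-affine) case gives $\numCSPd(\Gamma\cup\GammaPin)\APequiv\numBIS$ verbatim; these are exactly the first two bullets of the corollary, so nothing more is needed there.

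For the third bullet I would argue that in each of the two remaining cases of Theorem~\ref{theorem:complexity} there is no FPRAS unless $\NPtime=\RPtime$. In the \ORconj{}/\NANDconj{} case, Theorem~\ref{theorem:complexity} gives $\numwHISd \APredto \numCSPd(\Gamma\cup\GammaPin)$, where $w$ is the greatest width of a relation in $\Gamma$; since $\Gamma$ here contains a non-affine \ORconj{} or \NANDconj{} relation, that relation has width at least~$2$, so $w\geq 2$. Here I would cite the fact recalled immediately before the corollary in the excerpt: unless $\NPtime=\RPtime$, there is no FPRAS for counting independent sets in graphs of maximum degree at least~$25$ \cite{DFJ2002:IS-sparse}, hence none for $\numwHISd[d]$ with $w\geq 2$ and $d\geq 25$. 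An FPRAS for $\numCSPd(\Gamma\cup\GammaPin)$ would, via the AP-reduction $\numwHISd\APredto\numCSPd(\Gamma\cup\GammaPin)$ and the definition of AP-reducibility (if $g$ has an FPRAS and $f\APred g$ then $f$ has an FPRAS), yield an FPRAS for $\numwHISd$, a contradiction unless $\NPtime=\RPtime$. In the fourth case of Theorem~\ref{theorem:complexity}, $\numCSPd(\Gamma\cup\GammaPin)\APequiv\numSAT$, and since $\numSAT$ is complete for $\numP$ under AP-reductions \cite{DGGJ2004:Approx} it has no FPRAS unless $\NPtime=\RPtime$; by AP-interreducibility the same holds for $\numCSPd(\Gamma\cup\GammaPin)$.

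Assembling these, when $d\geq 25$ and $\Gamma$ is not affine and $\Gamma\not\subseteq\IMconj$, $\Gamma$ falls into the third or fourth case of Theorem~\ref{theorem:complexity}, and in both there is no FPRAS for $\numCSPd(\Gamma\cup\GammaPin)$ unless $\NPtime=\RPtime$; combined with the affine and $\IMconj$ cases this is exactly Corollary~\ref{cor:main}. The only real subtlety — and the single place where $d\geq 25$ rather than merely $d\geq 3$ is used — is the step that converts the sandwiched \ORconj{}/\NANDconj{} bounds into a hardness statement: one must note that the \emph{lower} bound $\numwHISd\APredto\numCSPd(\Gamma\cup\GammaPin)$ already suffices (the upper bound $\APredto\numwHISd[kd]$ plays no role), and that the width of a non-affine pseudo-monotone or pseudo-antitone relation is genuinely at least~$2$ so that the hardness of $\numwHISd[d]$ for $w\geq 2$, $d\geq 25$ applies. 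That, together with correctly quoting the direction of AP-reductions, is all that the proof requires; there are no calculations to grind through.
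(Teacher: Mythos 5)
Your proposal is correct and matches the paper's own argument: the paper likewise derives the corollary from Theorem~\ref{theorem:complexity}, using the lower bound $\numwHISd \APredto \numCSPd(\Gamma\cup\GammaPin)$ with $w\geq 2$ together with the hardness of degree-$25$ independent set counting \cite{DFJ2002:IS-sparse}, and the AP-completeness of $\numSAT$ \cite{DGGJ2004:Approx}, to collapse the last two cases into ``no FPRAS unless $\NPtime=\RPtime$.'' Your observations about only the lower bound being needed and the width being at least~$2$ are exactly the implicit steps in the paper's brief justification.
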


$\Gamma \cup \GammaPin$ is affine (respectively, in \ORconj{} or in
\NANDconj{}) if, and only if $\Gamma$ is, so the case for large-degree
instances ($d\geq 25$) corresponds exactly in complexity to the
unbounded case \cite{DGJ2007:Bool-approx}.  The case for lower degree
bounds is more complex. To put Theorem~\ref{theorem:complexity} in
context, we summarize the known approximability of $\numwHISd$,
parameterized by $d$ and $w$.

The case $d=1$ is clearly in \FPtime{} (Theorem~\ref{thrm:degree-1})
and so is the
case $d=w=2$, which corresponds to counting independent sets in graphs
of maximum degree two. For $d=2$ and width $w\geq 3$, Dyer and
Greenhill have shown that there is an FPRAS for $\numwHISd$
\cite{DG2000:IS-Markov}. For $d=3$, they have shown that there is an
FPRAS if the the width $w$ is at most~3. For larger width, the
approximability of $\numwHISd[3]$ is still not known. With the width
restricted to $w=2$ (normal graphs), Weitz has shown that, for degree
$d\in \{3,4,5\}$, there is a deterministic approximation scheme that
runs in polynomial time (a PTAS) \cite{Wei2006:IS-threshold}. This
extends a result of Luby and Vigoda, who gave an FPRAS for $d\leq 4$
\cite{LV1999:Convergence}.  For $d>5$, approximating $\numwHISd$
becomes considerably harder. More precisely, Dyer, Frieze and Jerrum
have shown that for $d=6$ the Monte Carlo Markov chain technique is
likely to fail, in the sense that ``cautious'' Markov chains are
provably slowly mixing \cite{DFJ2002:IS-sparse}. They also showed
that, for $d=25$, there can be no polynomial-time algorithm for
approximate counting, unless $\NPtime=\RPtime$. These results imply
that for $d\in \{6,\dots,24\}$ and $w\geq 2$ the Monte Carlo Markov
chain technique is likely to fail and for $d\geq 25$ and $w\geq 2$,
there can be no FPRAS unless $\NPtime=\RPtime$.
Table~\ref{tab:aHIS-complexity} summarizes the results.

\begin{table}[t]
\centering\renewcommand{\arraystretch}{1.15}
\begin{tabular}{|c|c|l|}
    \hline
    Degree $d$ & Width $w$ & Approximability of $\numwHISd[d]$ \\
    \hline
    $1$
        & $\geq 2$
        & \FPtime \\
    $2$
        & $2$
        & \FPtime \\
    $2$
        & $\geq 3$
        & FPRAS~\cite{DG2000:IS-Markov} \\
    $3$
        & $2,3$
        & FPRAS~\cite{DG2000:IS-Markov} \\
    $3,4,5$
        & $2$
        & PTAS~\cite{Wei2006:IS-threshold} \\
    $6,\dots,24$
        & $\geq 2$
        & The MCMC method is likely to fail~\cite{DFJ2002:IS-sparse} \\
    $\geq 25$
        & $\geq 2$
        & No FPRAS unless $\NPtime=\RPtime$~\cite{DFJ2002:IS-sparse} \\
    \hline
\end{tabular}
\caption{Approximability of $\numwHISd[d]$ (still open for all other
    values of $d$ and $w$).}
\label{tab:aHIS-complexity}
\end{table}

Returning to bounded-degree \numCSP{}, the case $d=2$ seems to be
rather different to degree bounds three and higher.  This is
also the case for decision CSP --- recall that
degree-$d$ $\CSP(\Gamma\cup\GammaPin)$ has the same complexity as unbounded-degree
$\CSP(\Gamma\cup\GammaPin)$ for all $d\geq 3$
\cite{DF2003:bdeg-gensat}, while degree-2 $\CSP(\Gamma\cup\GammaPin)$
is often easier than the unbounded-degree case
\cite{DF2003:bdeg-gensat,Fed2001:Fanout} but  the complexity of degree-2
$\CSP(\Gamma\cup\GammaPin)$ is still open for some $\Gamma\!$.

Our key techniques for determining the complexity of $\numCSPd(\Gamma
\cup \GammaPin)$ for $d\geq 3$ were the 3-simulation of equality and
Theorem~\ref{thrm:trichotomy}, which says that every Boolean relation
is in \ORconj{}, in \NANDconj{} or 3-simulates equality.  However, it
seems that not all relations that 3-simulate equality also 2-simulate
equality so the corresponding classification of relations does not
appear to hold.  It seems that different techniques will be required
for the degree-2 case.  For example, it is possible that there is no
FPRAS for $\numCSPd[3](\Gamma\cup\GammaPin)$ except when $\Gamma$ is
affine.  However, Bubley and Dyer have shown that there is an FPRAS
for degree-2 \numSAT{}, even though the exact counting problem is
\numP{}-complete \cite{BD1997:Graph-orient}.  This shows that there is
a class $\mathcal{C}$ of constraint languages for which
$\numCSPd[2](\Gamma \cup \GammaPin)$ has an FPRAS for every $\Gamma
\in \mathcal{C}$ but for which no exact polynomial-time algorithm is
known.

We leave the complexity of degree-2 \numCSP{} and of \numBIS{} and the
the various parameterized versions of the counting hypergraph
independent sets problem as open questions.


\bibliographystyle{plain}
\bibliography{dyer}

\end{document}